\newtheorem{theorem}{Theorem}
\newtheorem{lemma}[theorem]{Lemma}
\newtheorem{proposition}[theorem]{Proposition}
\newtheorem{remark}[theorem]{Remark}
\newenvironment{proof}[1][Proof]{\noindent\textbf{#1.} }{\ \rule{0.5em}{0.5em}}
\begin{document}

\title{Reverse test and quantum analogue of classical fidelity and
generalized fidelity}
\author{Keiji Matsumoto \\
National Institute of Informatics, 2-1-2, Hitotsubashi, Chiyoda-ku, Tokyo \\
Quantum Computation and Information Project, SORST, JST,\\
5-28-3, Hongo, Bunkyo-ku, Tokyo 113-0033, Japan}
\maketitle

\section{Introduction}

In doing hypothesis test in quantum mechanical setting, key part is choice
of measurement which maps given pair $\left\{ \rho ,\sigma \right\} $ of
quantum states to a pair $\left\{ p,q\right\} $ of probability
distributions. Its inverse operation, or a CPTP map form $\left\{
p,q\right\} $ to $\left\{ \rho ,\sigma \right\} $ is called \textit{reverse
test, }and plays an essential role in characterizing largest monotone
quantum analogue of relative entropy\thinspace \cite{Matsumoto-2}\cite%
{Matsumoto-3}. In this paper, we exploit the same line of argument in
studying quantum analogues of \ affinity, or classical fidelity $F\left(
p,q\right) =\sum_{x}\sqrt{p\left( x\right) }\sqrt{q\left( x\right) }$, and
more \textit{generalized fidelity }$F_{f}\left( p,q\right) :=\sum_{x}p\left(
x\right) f\left( q\left( x\right) /p\left( x\right) \right) $, where $f$ is
an operator monotone function on $[0,\infty )$. (For example, $f\left(
t\right) =x^{\alpha }$ ($0<\alpha <1$.) 

In the paper, based on reverse test, we define $F_{\min }\left( \rho ,\sigma
\right) $, which turns out to equal $\mathrm{tr}\,\rho \sqrt{\rho
^{-1/2}\sigma \rho ^{-1/2}}$. \ This quantity is monotone increasing by the
application of TPCP maps, and in fact is the smallest one among the numbers
satisfying these properties, while $F\left( \rho ,\sigma \right) $ is the
largest. It is also proved that  $F_{\min }$ satisfies strong joint
concavity using reverse test. 

For  generalized fidelity, we also introduce  $F_{f}^{\min }\left( \rho
,\sigma \right) $ in the similar manner, which turns out to equal $\mathrm{tr%
}\,\rho f\left( \rho ^{-1/2}\sigma \rho ^{-1/2}\right) $. Again, this
quantity is monotone increasing and is the smallest one among the numbers
satisfying these properties. Joint concavity of $F_{f}^{\min }\left( \rho
,\sigma \right) $ is also proved using reverse test.      

It is known that fidelity between infinitesimally different states gives
rise to SLD Fisher information metric $J^{S}$, or the smallest monotone
metric, and that $\cos ^{-1}F\left( \rho ,\sigma \right) $ equals the
integral of $J^{S}$ along the geodesic, or the curve which minimize the
integral, connecting $\rho $ and $\sigma $.

Correspondingly, $F_{\min }\left( \rho ,\sigma \right) $ gives rise to RLD
Fisher information metric $J^{R}$, or the largest monotone metric. However,
the integral of RLD Fisher information metric along the geodesic does not
equal $\cos ^{-1}F_{\min }\left( \rho ,\sigma \right) $. In fact, $\ $cosine
of the integral, denoted by $F_{R}\left( \rho ,\sigma \right) $, is another
monotone quantum analogue of classical fidelity, and is the smallest one
among those which satisfy triangle inequality. On the other hand, $F_{\min
}\left( \rho ,\sigma \right) $ is the integral of RLD Fisher information
along the curve which minimize the integral for all the curves with
commutative RLD.

An upper and a lower bound of the quantum statistical distance $\Delta
\left( \rho ,\sigma \right) =\frac{1}{2}\left\Vert \rho -\sigma \right\Vert
_{1}$ using $F\left( \rho ,\sigma \right) $ is one of notable feature of
this quantity. It turns out that $F_{\min }\left( \rho ,\sigma \right) $
gives analogous bounds of $\Delta _{\max }\left( \rho ,\sigma \right) $,
which is another quantum version of statistical distance $\Delta \left(
p,q\right) =\frac{1}{2}\left\Vert p-q\right\Vert _{1}$, defined using
reverse test.

\section{Classical fidelity and fidelity}

We consider probability distributions over finite set $\mathcal{X}$ with $%
\left\vert \mathcal{X}\right\vert =d^{\prime }$, and quantum states $%
\mathcal{S}\left( \mathcal{H}\right) $ over $d$-dimensional Hilbert space $%
\mathcal{H}$. Define classical and quantum fidelity $F$ by

\begin{eqnarray*}
F\left( p,q\right) &:&=\sum_{x\in \mathcal{X}}\sqrt{p\left( x\right) }\sqrt{%
q\left( x\right) }, \\
F\left( \rho ,\sigma \right) &:&=\max_{U\text{:unitary}}\mathrm{tr}\,\sqrt{%
\rho }\sqrt{\sigma }U \\
&=&\mathrm{tr}\,\sqrt{\sqrt{\sigma }\rho \sqrt{\sigma }}.
\end{eqnarray*}%
Known facts about them are :

\begin{itemize}
\item 
\begin{equation}
F\left( \rho ,\sigma \right) =\min_{M:\text{measurement}}F\left( M\left(
\rho \right) ,M\left( \sigma \right) \right)  \label{F=maxFM}
\end{equation}%
where $M\left( \rho \right) $ is the probability distribution of measurement 
$M$ applied to $\rho $.

\item 
\begin{equation}
F\left( \rho ,\sigma \right) =\mathrm{tr}\,W_{\rho }^{\dagger }W_{\sigma },
\label{F=trWW}
\end{equation}%
where $W_{\rho }$ and $W_{\sigma }$ are $d\times d$ matrices with 
\begin{eqnarray*}
W_{\rho }W_{\rho }^{\dagger } &=&\rho ,\,W_{\sigma }W_{\sigma }^{\dagger
}=\sigma , \\
W_{\rho }^{\dagger }W_{\sigma } &=&W_{\sigma }^{\dagger }W_{\rho }\geq 0.
\end{eqnarray*}

\item Given parameterized family $\left\{ p_{t}\right\} $ and $\left\{ \rho
_{t}\right\} $, define \textit{Fisher information} $J_{t}$ and \textit{%
symmetric logarithmic derivative (SLD) Fisher information} $J_{t}^{S}$ by 
\begin{eqnarray*}
J_{t} &:&=\sum_{x\in \mathcal{X}}\left( l_{t}\left( x\right) \right)
^{2}p_{t}\left( x\right) , \\
J_{t}^{S} &:&=\mathrm{tr}\,\left( L_{t}^{S}\right) ^{2}\rho _{t},
\end{eqnarray*}%
where $l_{t}$, called \textit{logarithmic derivative}, is defined by $%
l_{t}\left( x\right) :=\frac{\mathrm{d}\,}{\mathrm{d}t}\log p_{t}\left(
x\right) $, and $L_{t}^{S}$, called \textit{symmetric logarithmic derivative
(SLD)}, is a solution to a linear equation \ 
\begin{equation*}
\frac{\mathrm{d}\rho _{t}}{\mathrm{d}t}=\frac{1}{2}\left\{ L_{t}^{S}\rho
_{t}+\rho _{t}\,L_{t}^{S}\right\} .
\end{equation*}%
Then,%
\begin{eqnarray*}
F\left( p_{t},p_{t+\varepsilon }\right) &=&1-\frac{1}{8}J_{t}\varepsilon
^{2}+o\left( \varepsilon ^{2}\right) , \\
F\left( \rho _{t},\rho _{t+\varepsilon }\right) &=&1-\frac{1}{8}%
J_{t}^{S}\varepsilon ^{2}+o\left( \varepsilon ^{2}\right) ,
\end{eqnarray*}%
and 
\begin{eqnarray}
F\left( p,q\right) &=&\cos \min_{C}\frac{1}{2}\int_{C}\sqrt{J_{t}}dt,
\label{F=int-J} \\
F\left( \rho ,\sigma \right) &=&\cos \min_{C}\frac{1}{2}\int_{C}\sqrt{%
J_{t}^{S}}dt,  \label{F=int-JS}
\end{eqnarray}%
where minimum is taken for all the paths with $p_{0}=p$, $p_{1}=q$, and \ $%
\rho _{0}=\rho $, $\rho _{1}=\sigma $, respectively.

\item (symmetry) 
\begin{equation*}
F\left( \rho ,\sigma \right) =F\left( \sigma ,\rho \right) .
\end{equation*}

\item (Monotonicity)%
\begin{equation}
F\left( \rho ,\sigma \right) \leq F\left( \Lambda \left( \rho \right)
,\Lambda \left( \sigma \right) \right) .  \label{monotone}
\end{equation}

\item (triangle inequality)%
\begin{eqnarray}
\cos ^{-1}F\left( \rho ,\sigma \right) &\leq &\cos ^{-1}F\left( \rho ,\tau
\right) +\cos ^{-1}F\left( \tau ,\sigma \right) ,  \label{triangle-1} \\
F\left( \rho ,\sigma \right) &\geq &2F\left( \rho ,\tau \right) F\left( \tau
,\sigma \right) -1.  \label{triangle-2}
\end{eqnarray}

\item (strong joint concavity)%
\begin{equation}
F\left( \sum_{y\in \mathcal{Y}}\lambda _{y}\rho _{y},\sum_{y\in \mathcal{Y}%
}\mu _{y}\sigma _{y}\right) \geq \sum_{y\in \mathcal{Y}}\sqrt{\lambda
_{y}\mu _{y}}F\left( \rho _{y},\sigma _{y}\right)   \label{strong-concave}
\end{equation}

\item (Multiplicativity) 
\begin{equation*}
F\left( \rho ^{\otimes n},\sigma ^{\otimes n}\right) =F\left( \rho ,\sigma
\right) ^{n}
\end{equation*}
\end{itemize}

In the paper, we consider quantities satisfying:

\begin{description}
\item[(N)] $F^{Q}\left( p,q\right) =F\left( p,q\right) $, for all the
probability distributions $p$, $q$.

\item[(M)] $F^{Q}\left( \rho ,\sigma \right) \leq F^{Q}\left( \Lambda \left(
\rho \right) ,\Lambda \left( \sigma \right) \right) $
\end{description}

\section{Another quantum analogue of classical fidelity}

A triplet $\left( \Phi ,\left\{ p,q\right\} \right) $ of a CPTP map $\Phi $
and probability distributions $p$, $q$ over the set $\mathcal{X}$ ($%
\left\vert \mathcal{X}\right\vert =d^{\prime }$) with 
\begin{equation}
\Phi \left( p\right) =\rho ,\Phi \left( q\right) =\sigma ,
\label{reverse-test}
\end{equation}%
is called \textit{reverse test} of $\left\{ \rho ,\sigma \right\} $. A
reverse test $\left( \Phi ,\left\{ p,q\right\} \right) $ with (\ref%
{reverse-test}) is said to be \textit{minimal} satisfying when $\left\vert 
\mathcal{X}\right\vert =d=\dim \mathcal{H}$. Define 
\begin{equation}
F_{\min }\left( \rho ,\sigma \right) :=\max_{\left( \Phi ,\left\{
p,q\right\} \right) \text{:(\ref{reverse-test})}}F\left( p,q\right) .
\label{def:fmin}
\end{equation}

\begin{theorem}
\label{th:fmin}Suppose $F^{Q}\left( p,q\right) $ (N) and (M). Then, $F_{\min
}\left( \rho ,\sigma \right) \leq F^{Q}\left( \rho ,\sigma \right) \leq
F\left( \rho ,\sigma \right) $. Also, $F_{\min }\left( \rho ,\sigma \right) $
satisfies (N) and (M).
\end{theorem}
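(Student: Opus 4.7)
The plan is to break the theorem into four short sandwich-style arguments and in each case use exactly one of the two identities available from Section~2: equation (\ref{F=maxFM}) characterising $F$ through a minimum over measurements, and the definition (\ref{def:fmin}) characterising $F_{\min}$ through a maximum over reverse tests. The arguments are essentially dual to one another.

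First I would establish the upper bound $F^{Q}(\rho,\sigma)\le F(\rho,\sigma)$. Pick an arbitrary measurement $M$. By (M), $F^{Q}(\rho,\sigma)\le F^{Q}(M(\rho),M(\sigma))$, and by (N) the right-hand side equals $F(M(\rho),M(\sigma))$. Minimising the resulting inequality over $M$ and invoking the known identity (\ref{F=maxFM}) gives $F^{Q}(\rho,\sigma)\le F(\rho,\sigma)$.

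Next I would establish the lower bound $F_{\min}(\rho,\sigma)\le F^{Q}(\rho,\sigma)$ by a formally dual argument. Given any reverse test $(\Phi,\{p,q\})$ with $\Phi(p)=\rho$, $\Phi(q)=\sigma$, apply (M) to $\Phi$ (regarded as a CPTP map from the classical algebra on $\mathcal{X}$ to $\mathcal{B}(\mathcal{H})$) to get $F^{Q}(p,q)\le F^{Q}(\rho,\sigma)$, and then rewrite $F^{Q}(p,q)=F(p,q)$ by (N). Taking the maximum over reverse tests and comparing with the definition (\ref{def:fmin}) yields the desired bound.

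It then remains to check that $F_{\min}$ itself is an instance of such an $F^{Q}$. For (M), given a CPTP map $\Lambda$ and any reverse test $(\Phi,\{p,q\})$ of $\{\rho,\sigma\}$, the composition $(\Lambda\circ\Phi,\{p,q\})$ is a reverse test of $\{\Lambda(\rho),\Lambda(\sigma)\}$ using the same classical pair, so $F(p,q)\le F_{\min}(\Lambda(\rho),\Lambda(\sigma))$; maximising over $(\Phi,\{p,q\})$ gives (M). For (N), when $\rho,\sigma$ already come from probability distributions $p,q$, the identity reverse test shows $F_{\min}(p,q)\ge F(p,q)$, while the upper bound of the theorem applied to $F^{Q}=F_{\min}$ (which we have just shown is legitimate once (M) is in hand, but can also be read off directly from the first bullet point by the same measurement-monotonicity argument) gives $F_{\min}(p,q)\le F(p,q)$.

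I do not expect a real obstacle here: everything is a one-line use of monotonicity together with the variational characterisations already recorded in Section~2. The only delicate point is the logical order in the last paragraph, where one must avoid circularity by establishing the inequality $F_{\min}\le F$ on classical inputs without assuming ahead of time that $F_{\min}$ satisfies (N); this is handled by noting that the upper-bound argument of the first paragraph only uses (M) and the classical identity (\ref{F=maxFM}), both of which are available for $F_{\min}$ independently of (N).
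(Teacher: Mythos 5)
Your first three arguments --- the sandwich $F_{\min}\le F^{Q}\le F$ and the monotonicity (M) of $F_{\min}$ --- coincide with the paper's proof: optimal (or arbitrary) measurement plus (M) and (N) for the upper bound, the dual reverse-test argument for the lower bound, and composition $\Lambda\circ\Phi$ for (M). The one genuine problem is the last step, $F_{\min}(p,q)\le F(p,q)$, which you correctly single out as the delicate point but then resolve incorrectly. You assert that the upper-bound argument of your first paragraph ``only uses (M) and the classical identity (\ref{F=maxFM})''; it does not. It also uses (N), at the step $F^{Q}(M(\rho),M(\sigma))=F(M(\rho),M(\sigma))$, and when you specialize to $F^{Q}=F_{\min}$ with classical inputs that step is exactly the inequality $F_{\min}(M(p),M(q))\le F(M(p),M(q))$ that you are in the middle of proving. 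As written, the argument for the upper half of (N) is circular.

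The repair is the paper's one-liner, which bypasses the $F^{Q}\le F$ machinery entirely: for any reverse test $\left(\Phi,\{p^{\prime},q^{\prime}\}\right)$ of a classical pair $\{p,q\}$, monotonicity of the fidelity $F$ \emph{itself} under CPTP maps --- the known fact (\ref{monotone}), not the hypothesis (M) --- gives $F(p^{\prime},q^{\prime})\le F\left(\Phi(p^{\prime}),\Phi(q^{\prime})\right)=F(p,q)$, and taking the supremum over reverse tests yields $F_{\min}(p,q)\le F(p,q)$. Combined with the identity reverse test, which you already have, this establishes (N) with no circularity, since (\ref{monotone}) is an antecedently known property of $F$ rather than something being proved about $F_{\min}$.
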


\begin{proof}
Let $M$ be a measurement achieving the minimum of (\ref{F=maxFM}). Then, 
\begin{eqnarray*}
&&F^{Q}\left( \rho ,\sigma \right) \underset{\text{(M)}}{\leq }F^{Q}\left(
M\left( \rho \right) ,M\left( \sigma \right) \right) \underset{\text{(N)}}{=}%
F\left( M\left( \rho \right) ,M\left( \sigma \right) \right)  \\
&=&F\left( \rho ,\sigma \right) .
\end{eqnarray*}%
Let $\Phi $ be a CPTP map achieving the maximum of (\ref{def:fmin}). Then%
\begin{eqnarray*}
&&F^{Q}\left( \rho ,\sigma \right) =F^{Q}\left( \Phi \left( p\right) ,\Phi
\left( q\right) \right) \underset{\text{(M)}}{\geq }F^{Q}\left( p,q\right) 
\underset{\text{(N)}}{=}F\left( p,q\right)  \\
&=&F_{\min }\left( \rho ,\sigma \right) .
\end{eqnarray*}%
Obviously, $F_{\min }\left( p,q\right) \geq F\left( p,q\right) $. Also, for
any $p^{\prime }$, $q^{\prime }$ with $p=\Phi \left( p^{\prime }\right) $, $%
q=\Phi \left( q^{\prime }\right) $, $F\left( p^{\prime },q^{\prime }\right)
\leq F\left( p,q\right) $, by (\ref{monotone}). Therefore, $F_{\min }\left(
p,q\right) \leq F\left( p,q\right) $, and we have (N).

That $F_{\min }$ satisfies (M) is proved as follows.%
\begin{eqnarray*}
F_{\min }\left( \Lambda \left( \rho \right) ,\Lambda \left( \sigma \right)
\right) &=&\max \left\{ F\left( p,q\right) \,;\Phi \text{: CPTP, }\Phi
\left( p\right) =\Lambda \left( \rho \right) ,\Phi \left( q\right) =\Lambda
\left( \sigma \right) \right\} \\
&\geq &\max \left\{ F\left( p,q\right) \,;\text{ }\Phi =\Phi ^{\prime }\circ
\Lambda \text{, }\Phi ^{\prime }\left( p\right) =\rho ,\Phi ^{\prime }\left(
q\right) =\sigma \right\} \\
&=&\max \left\{ F\left( p,q\right) \,;\text{ }\Phi ^{\prime }\left( p\right)
=\rho ,\Phi ^{\prime }\left( q\right) =\sigma \right\} \\
&=&F_{\min }\left( \rho ,\sigma \right) .
\end{eqnarray*}
\end{proof}

\begin{theorem}
\label{th:fmin=fr}Suppose $\rho $ and $\sigma $ are strictly positive. Then, 
\begin{equation*}
F_{\min }\left( \rho ,\sigma \right) =\mathrm{tr}\,\rho \sqrt{\rho
^{-1/2}\sigma \rho ^{-1/2}},
\end{equation*}%
and the maximum of (\ref{def:fmin}) is achieved by any minimal reverse test $%
\left( \Phi ,\left\{ p,q\right\} \right) $.
\end{theorem}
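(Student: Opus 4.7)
The plan is to establish the identity by two matching bounds: an upper bound $F(p,q)\leq \mathrm{tr}\,\rho\sqrt{\rho^{-1/2}\sigma\rho^{-1/2}}$ valid for every reverse test, and an explicit minimal reverse test attaining equality. The guiding observation is that
\[
\mathrm{tr}\,\rho\sqrt{\rho^{-1/2}\sigma\rho^{-1/2}}=\mathrm{tr}(\rho\#\sigma),
\]
where $A\#B:=A^{1/2}(A^{-1/2}BA^{-1/2})^{1/2}A^{1/2}$ is the Kubo--Ando geometric mean, whose joint concavity and positive homogeneity will do the heavy lifting.

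For the achievability direction I would diagonalise $\rho^{-1/2}\sigma\rho^{-1/2}=\sum_{i=1}^{d}\lambda_{i}|e_{i}\rangle\langle e_{i}|$, set $|\psi_{i}\rangle:=\rho^{1/2}|e_{i}\rangle$, and form the minimal ($|\mathcal{X}|=d$) reverse test whose letter $i$ carries the pure state $\rho_{i}:=|\psi_{i}\rangle\langle\psi_{i}|/\langle e_{i}|\rho|e_{i}\rangle$ with weights $p(i):=\langle e_{i}|\rho|e_{i}\rangle$ and $q(i):=\lambda_{i}p(i)$. A direct computation will confirm $\sum_{i}p(i)\rho_{i}=\rho$, $\sum_{i}q(i)\rho_{i}=\sigma$, $\sum_{i}q(i)=\mathrm{tr}\,\sigma=1$, and
\[
F(p,q)=\sum_{i}p(i)\sqrt{\lambda_{i}}=\mathrm{tr}\,\rho\sqrt{\rho^{-1/2}\sigma\rho^{-1/2}},
\]
so this minimal reverse test already attains the putative value.

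For the upper bound I regard a generic reverse test as an ensemble $\rho_{x}=\Phi(|x\rangle\langle x|)$ with $\sum_{x}p(x)\rho_{x}=\rho$ and $\sum_{x}q(x)\rho_{x}=\sigma$, and apply the superadditivity of the geometric mean (a consequence of Ando's joint concavity together with $(\lambda A)\#(\lambda B)=\lambda(A\#B)$):
\[
\rho\#\sigma=\Bigl(\sum_{x}p(x)\rho_{x}\Bigr)\#\Bigl(\sum_{x}q(x)\rho_{x}\Bigr)\geq \sum_{x}\bigl(p(x)\rho_{x}\bigr)\#\bigl(q(x)\rho_{x}\bigr)=\sum_{x}\sqrt{p(x)q(x)}\,\rho_{x},
\]
the final equality being the scalar identity $(aR)\#(bR)=\sqrt{ab}\,R$. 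Taking the trace and using $\mathrm{tr}\,\rho_{x}=1$ gives $\mathrm{tr}(\rho\#\sigma)\geq F(p,q)$, matching the construction above. The real technical content lies precisely in this operator inequality; granted Ando's superadditivity, everything else is routine linear algebra, and the two bounds together deliver $F_{\min}(\rho,\sigma)=\mathrm{tr}\,\rho\sqrt{\rho^{-1/2}\sigma\rho^{-1/2}}$ with the displayed minimal reverse test as a maximiser.
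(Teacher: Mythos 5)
Your argument is correct for the main identity, but it takes a genuinely different route from the paper. The paper first reduces to reverse tests with pure $\Phi(\delta_x)$ and then classifies \emph{all} such reverse tests explicitly (Section~\ref{sec:list-of-reverse}): writing $ND_p = [\sqrt{\rho}\;0]V$ and $ND_q = [\sqrt{\rho}T\;0]U$, every reverse test is encoded by a contraction $A = PUV^{\dagger}P$ with $TA = A^{\dagger}T \ge 0$, and one computes $F(p,q) = \mathrm{tr}\,D_qD_p = \mathrm{tr}\,\sqrt{\rho}\,TA\sqrt{\rho}$; the bound then follows from $T^2 - (TA)^2 = T(1-AA^{\dagger})T \ge 0$ and operator monotonicity of $\sqrt{t}$, with equality exactly at $A = \mathbf{1}$. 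You instead bypass the classification entirely: regarding the reverse test as an ensemble and invoking Kubo--Ando superadditivity $A\#C + B\#D \le (A+B)\#(C+D)$ together with $(aR)\#(bR) = \sqrt{ab}\,R$ gives $\mathrm{tr}(\rho\#\sigma) \ge F(p,q)$ in one line, and it even works directly for mixed $\Phi(\delta_x)$, so you do not need the paper's preliminary purification step. The paper itself gestures at this route (the remark identifying $F_{\min}(\rho,\sigma) = \mathrm{tr}(\rho\#\sigma)$ and the remark after the strong-concavity theorem), so your proof is a clean, shorter alternative whose operator-theoretic content is outsourced to Ando's theorem; what the paper's longer route buys is the explicit parameterization of all reverse tests, which is reused later for $F_f^{\min}$ and for the description $p = M(\rho')$, $q = M(\sigma')$. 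Your achievability construction (eigenbasis of $T^2$, $p(i) = \langle e_i|\rho|e_i\rangle$, $q(i) = \lambda_i p(i)$) is correct and coincides with the paper's optimal minimal test.

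One small incompleteness: the theorem asserts that the maximum is achieved by \emph{any} minimal reverse test, whereas you exhibit only one. To close this you need an argument that minimality forces optimality; in the paper's parameterization, $d' = d$ makes $A = UV^{\dagger}$ unitary, and then $TA = A^{\dagger}T \ge 0$ with $T>0$ gives $(TA)^2 = T^2$, hence $TA = T$ and $A = \mathbf{1}$, so every minimal reverse test is (up to relabeling and eigenvalue degeneracy) the one you wrote down. Either add this observation or weaken your conclusion to "some minimal reverse test achieves the maximum."
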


The proof will be given later.

\begin{remark}
Using geometric mean $A\#B=\sqrt{A}\sqrt{A^{-1/2}BA^{-1/2}}\sqrt{A}$%
\thinspace \cite{Ando}, 
\begin{equation}
F_{\min }\left( \rho ,\sigma \right) =\mathrm{tr}\,\left( \rho \#\sigma
\right) .  \label{F=mean}
\end{equation}%
Hence,  the well-known property of $\#$ 
\begin{equation*}
\Lambda \left( \rho \right) \#\Lambda \left( \sigma \right) \geq \Lambda
\left( \rho \#\sigma \right) 
\end{equation*}
immediately implies that  $F_{\min }$ satisfies (M).
\end{remark}

\begin{remark}
In \cite{Matsumoto-2}\cite{Matsumoto-3}, minimization of $\mathrm{D}\left(
p||q\right) :=\sum_{x\in \mathcal{X}}p\left( x\right) \ln \frac{p\left(
x\right) }{q\left( x\right) }$ over all the reverse tests $\left( \Phi
,\left\{ p,q\right\} \right) $ is considered, and it is shown that the
minimum is achieved also by any minimal reverse test.
\end{remark}

\section{Listing all reverse tests}

\label{sec:list-of-reverse}

In this section, to solve maximization (\ref{def:fmin}), we give full
characterization of all reverse tests $\left( \Phi ,\left\{ p,q\right\}
\right) $ of $\left\{ \rho ,\sigma \right\} $ with \ $\Phi \left( \delta
_{x}\right) $ being a pure state, where $\delta _{x_{0}}\left( x\right) $
denotes a probability distribution concentrated at $x=x_{0}$ ($x,x_{0}\in 
\mathcal{X}$). 

The reason for such a restriction to be made is as follows. If 
\begin{equation*}
\Phi \left( \delta _{x}\right) =\rho _{x}=\sum_{y\in \mathcal{Y}}s\left(
y|x\right) \left\vert \varphi _{xy}\right\rangle \left\langle \varphi
_{xy}\right\vert ,
\end{equation*}%
let $\Phi ^{\prime }$ a CPTP map from probability distributions over $%
\mathcal{X\times Y}$ to $\mathcal{S}\left( \mathcal{H}\right) $ such that 
\begin{equation*}
\Phi ^{\prime }\left( \delta _{\left( x,y\right) }\right) =\left\vert
\varphi _{xy}\right\rangle \left\langle \varphi _{xy}\right\vert .
\end{equation*}%
Then \ if $\Phi \left( p\right) =\rho $ and $\Phi \left( q\right) =\sigma $, 
$\Phi ^{\prime }\left( p^{\prime }\right) =\rho $ and $\Phi ^{\prime }\left(
q^{\prime }\right) =\sigma $, where 
\begin{equation*}
p^{\prime }\left( x,y\right) =p\left( x\right) s\left( y|x\right)
,\,\,q^{\prime }\left( x,y\right) =q\left( y\right) s\left( y|x\right) .
\end{equation*}%
Hence, $\left( \Phi ^{\prime },\left\{ p^{\prime },q^{\prime }\right\}
\right) $ is a reverse test of $\left\{ \rho ,\sigma \right\} $ with $\Phi
\left( \delta _{x}\right) $ being a pure state, and $p=\Psi _{1}\left(
p^{\prime }\right) $, $q=\Psi _{1}\left( q^{\prime }\right) $, where $\Psi
_{1}$ is taking marginal over $\mathcal{Y}$. Hence, $F\left( p.q\right) \geq
F\left( p^{\prime },q^{\prime }\right) $. Also, observe $p^{\prime }=\Psi
_{2}\left( p\right) $, $q^{\prime }=\Psi _{2}\left( q\right) $, where      
\begin{equation*}
\Psi _{2}:r\left( x\right) \rightarrow r\left( x\right) s\left( y|x\right) .
\end{equation*}%
Therefore, after all,   
\begin{equation*}
F\left( p.q\right) =F\left( p^{\prime },q^{\prime }\right) .
\end{equation*}%
Therefore, we can replace $\Phi $ by $\Phi ^{\prime }$.

Below, we \ indicate $\Phi $ by $\ $a matrix $N$, whose $x$th column vector
is $\left\vert \varphi _{x}\right\rangle $ with $\left\vert \varphi
_{x}\right\rangle \left\langle \varphi _{x}\right\vert =\Phi \left( \delta
_{x}\right) $. Then the condition (\ref{reverse-test}) is rewritten as 
\begin{equation}
\sum_{x\in \mathcal{X}}p\left( x\right) \left\vert \varphi _{x}\right\rangle
\left\langle \varphi _{x}\right\vert =\rho ,\sum_{x\in \mathcal{X}}q\left(
x\right) \left\vert \varphi _{x}\right\rangle \left\langle \varphi
_{x}\right\vert =\sigma   \label{sum=rho}
\end{equation}%
Here note in general, $d^{\prime }$ can be larger than $d=\dim \mathcal{H}$.

\begin{lemma}
\label{lem:ww=rho}$WW^{\dagger }=\rho $ if and only if 
\begin{equation*}
W=\sqrt{\rho }U,
\end{equation*}%
with $U$ being isometry, $UU^{\dagger }=\mathbf{1}$.
\end{lemma}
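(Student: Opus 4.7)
The lemma is a purely linear-algebraic equivalence; I would verify one direction by direct substitution and the other by producing $U$ explicitly.

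The ``if'' direction is immediate: given $W=\sqrt{\rho}\,U$ with $UU^{\dagger}=\mathbf{1}$,
\begin{equation*}
WW^{\dagger} = \sqrt{\rho}\,(UU^{\dagger})\sqrt{\rho} = \sqrt{\rho}\,\mathbf{1}\,\sqrt{\rho} = \rho.
\end{equation*}

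For the ``only if'' direction, the clean case (which is the one actually invoked in Theorem \ref{th:fmin=fr}) is when $\rho$ is strictly positive, so that $\sqrt{\rho}$ is invertible. I would set $U := \rho^{-1/2}W$, whereupon
\begin{equation*}
\sqrt{\rho}\,U = W, \qquad UU^{\dagger} = \rho^{-1/2}(WW^{\dagger})\rho^{-1/2} = \rho^{-1/2}\,\rho\,\rho^{-1/2} = \mathbf{1},
\end{equation*}
which finishes the invertible case in two lines.

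The only mildly delicate point is if one wants to cover general positive semidefinite $\rho$, where $\sqrt{\rho}$ cannot be inverted. Here the observation $\mathrm{range}(W)\subseteq \mathrm{range}(\rho)$ (from $WW^{\dagger}=\rho$) lets one use the Moore--Penrose pseudoinverse $\rho^{+1/2}$ of $\sqrt{\rho}$: the matrix $U_{0} := \rho^{+1/2}W$ satisfies $\sqrt{\rho}\,U_{0}=W$ and $U_{0}U_{0}^{\dagger}=P$, the projector onto $\mathrm{range}(\rho)$. With $Q := \mathbf{1}-P$, set $U := U_{0}+QV$ where $V$ is chosen so that $QV$ is a coisometry onto $\mathrm{range}(Q)$; this is possible as soon as $d'\geq d$, which is the ambient assumption. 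The cross terms $U_{0}V^{\dagger}Q$ and $QVU_{0}^{\dagger}$ vanish because $QU_{0}=0$, so $UU^{\dagger}=P+Q=\mathbf{1}$, and $\sqrt{\rho}\,U=W$ is preserved because $\sqrt{\rho}\,Q=0$. This freedom on $\ker\rho$ is the only real ``obstacle,'' and it is used precisely to extend the natural partial isometry to a full coisometry.
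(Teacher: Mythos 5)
Your route is the same as the paper's: the ``if'' direction by substitution, the invertible case via $U=\rho^{-1/2}W$, and the degenerate case via the Moore--Penrose pseudoinverse, the observation that $\mathrm{range}\,W\subseteq\mathrm{range}\,\rho$, and an extension of the partial isometry $U_{0}=\rho^{+1/2}W$ over $\ker\rho$. The invertible case --- the only one actually invoked afterwards, since the paper assumes $\rho>0$ from this point on --- is correct and complete.

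The extension step in the general case, however, has a genuine flaw. You set $U=U_{0}+QV$ with $(QV)(QV)^{\dagger}=Q$ and assert that the cross terms $U_{0}V^{\dagger}Q$ and $QVU_{0}^{\dagger}$ vanish ``because $QU_{0}=0$.'' That identity expresses orthogonality in the target space $\mathcal{H}$ (it says $\mathrm{range}\,U_{0}\subseteq\mathrm{range}\,P$) and does not control the cross terms, which require orthogonality of initial spaces inside the domain $\mathbb{C}^{d'}$, namely $\mathrm{range}(V^{\dagger}Q)\subseteq\ker U_{0}=(\mathrm{range}\,U_{0}^{\dagger})^{\perp}$, equivalently $U_{0}V^{\dagger}Q=0$. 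Without imposing this your construction can fail: take $d=d'=2$, $\rho=W=\mathrm{diag}(1,0)$, so $U_{0}=\mathrm{diag}(1,0)$ and $Q=\mathrm{diag}(0,1)$, and let $QV=E_{21}$, the matrix unit with a single $1$ in position $(2,1)$; then $(QV)(QV)^{\dagger}=Q$ and $\sqrt{\rho}\,U=W$, yet $UU^{\dagger}$ is the all-ones matrix rather than $\mathbf{1}$. The repair is to add the requirement $\mathrm{range}(V^{\dagger}Q)\subseteq\ker U_{0}$, which is feasible exactly when $\dim\ker\rho\leq d'-\mathrm{rank}\,\rho$, i.e.\ $d\leq d'$; this is precisely the content of the paper's direct-sum notation $U=\sqrt{\rho}^{-1}W\oplus U'$, where $U'$ is a partial isometry carrying $\ker U_{0}$ onto $\ker\rho$.
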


\begin{proof}
We only have to show `only if'. Suppose $WW^{\dagger }=\rho $. Then, letting 
$\sqrt{\rho }^{-1}$ be the (Moore-Penrose) generalized inverse of $\sqrt{%
\rho }$, we have 
\begin{equation*}
\left( \sqrt{\rho }^{-1}W\right) \left( \sqrt{\rho }^{-1}W\right) ^{\dagger
}=P,
\end{equation*}%
where $P$ is the projector on the support of $\rho $. Observe 
\begin{equation*}
\left( 1-P\right) W\left( \left( 1-P\right) W\right) ^{\dagger }=\left(
1-P\right) \rho \left( 1-P\right) =0
\end{equation*}%
implies 
\begin{equation*}
\left( 1-P\right) W=0
\end{equation*}%
Therefore, 
\begin{equation*}
\sqrt{\rho }^{-1}\left( \sqrt{\rho }^{-1}W\right) =PW=W.
\end{equation*}%
Let $U^{\prime }$ be a partial isometry from $\ker \,\left( \sqrt{\rho }%
^{-1}W\right) ^{\dagger }$ to $\ker \rho $, 
\begin{equation*}
U=\sqrt{\rho }^{-1}W\oplus U^{\prime }
\end{equation*}%
satisfies the requirement. Hence, we have the assertion.
\end{proof}

In the reminder of the section, we suppose $\rho >0$. Let $D_{p}$ and $D_{q}$
be 
\begin{equation}
D_{p}=\sqrt{\mathrm{diag}\,\left( p_{1},\cdots ,p_{d^{\prime }}\right) }%
,\,\,D_{q}=\sqrt{\mathrm{diag}\,\left( q_{1},\cdots ,q_{d^{\prime }}\right) }%
.  \label{DD=p}
\end{equation}%
Then, by (\ref{sum=rho}), $ND_{p}\left( ND_{p}\right) ^{\dagger }=\rho $ and 
$ND_{q}\left( ND_{q}\right) ^{\dagger }=\sigma $. Also,  
\begin{equation}
T:=\sqrt{\rho ^{-1/2}\sigma \rho ^{-1/2}}\,\,\left( >0\right)   \label{Tdef}
\end{equation}%
satisfies $\left( \sqrt{\rho }T\right) \left( \sqrt{\rho }T\right) ^{\dagger
}=\sigma $. \ Therefore, 
\begin{equation}
\left[ \sqrt{\rho }\,\,0\right] V=ND_{p},\,\left[ \sqrt{\rho }T\,\,0\right]
U=ND_{p},  \label{rhov=ND}
\end{equation}%
for some $U$,$V\in \mathrm{U}\left( \mathcal{H}^{\prime }\right) $, $\dim 
\mathcal{H}^{\prime }=d^{\prime }$. Then, 
\begin{eqnarray}
ND_{q}D_{p}N^{\dagger } &=&\left[ \sqrt{\rho }\,T\,0\right] UV^{\dagger }%
\left[ 
\begin{array}{c}
\sqrt{\rho }\, \\ 
0%
\end{array}%
\right]   \notag \\
&=&\sqrt{\rho }TA\sqrt{\rho }\geq 0,  \label{NDN=rhoTArho}
\end{eqnarray}%
where, with $P$ being the projector onto $\mathcal{H}$, $A=PUV^{\dagger }P$.
Therefore,%
\begin{equation}
TA=A^{\dagger }T\geq 0.  \label{AT=TA}
\end{equation}
Also, by (\ref{rhov=ND}), 
\begin{eqnarray*}
\left[ \sqrt{\rho }\,\,0\right] V\left( D_{p}^{-1}D_{q}\right) V^{\dagger }
&=&\left[ \sqrt{\rho }T\,\,0\right] UV^{\dagger } \\
&=&\left[ \sqrt{\rho }\,\,0\right] \left[ 
\begin{array}{cc}
TA & TA^{\prime } \\ 
A^{\prime \dagger }T & C%
\end{array}%
\right] ,
\end{eqnarray*}%
where $A^{\prime }:=PU\left( 1-P\right) $ and $C>0$. If $\dot{C}>0$ satisfy 
\begin{equation}
\ker TA^{\prime }\subset \ker C,\,\,\left( TA\right) ^{-1/2}\left(
TA^{\prime }\right) C^{-1/2}\leq \mathbf{1,}  \label{C>}
\end{equation}%
we have%
\begin{equation}
\tilde{T}:=\left[ 
\begin{array}{cc}
TA & TA^{\prime } \\ 
A^{\prime \dagger }T & C%
\end{array}%
\right] \geq 0.  \label{L}
\end{equation}

Note $\left\{ A;\left\Vert A\right\Vert \leq 1\right\} $ is identical to the
totality of matrices with the form $A=PUP$, where $U\in \mathrm{U}\left( 
\mathcal{H}^{\prime }\right) $, $\mathcal{H}\subset \mathcal{H}^{\prime }$, $%
\dim \mathcal{H}^{\prime }=d^{\prime }$ and $P$ is the projector onto $%
\mathcal{H}$ . Then, a reverse test can be composed as indecated in the
following (i)-(v):

\begin{description}
\item[(i)]  Choose  $A\in \left\{ A;\left\Vert A\right\Vert \leq 1\right\} $
with (\ref{AT=TA}).

\item[(ii)]  Compose $A^{\prime }$ such that $[A\,A^{\prime }]$ is isometry.

\item[(iii)]  Let   $\tilde{T}$ as of  (\ref{L}),  $V$ is diagonalize it :$%
\tilde{T}=VDV^{\dagger }$. 

\item[(vi)]  Define  $\left\vert \varphi _{x}\right\rangle $ ($x\in \mathcal{%
X}$) as the normalized column vectors of $[\sqrt{\rho }\,0]V$. 
\end{description}

Finally, $p\left( x\right) $ and $q\left( x\right) $ is the square of the
magnitude of the $x$th column vector of $[\sqrt{\rho }\,0]V$ and $\,\left[ 
\sqrt{\rho }TA\,\sqrt{\rho }TA^{\prime }\right] $, respectively. $p$ and $q$
are obtained also as follows. Define $\rho ^{\prime }$,$\sigma ^{\prime }\in 
\mathcal{S}\left( \mathcal{H}^{\prime }\right) $ by 
\begin{equation*}
\rho ^{\prime }=\left[ 
\begin{array}{cc}
\rho  & 0 \\ 
0 & 0%
\end{array}%
\right] ,\sigma ^{\prime }=\tilde{T}\rho \tilde{T}\text{,}
\end{equation*}%
and let the measurement $M$ be projectors onto eigenspaces of $\tilde{T}$.
Then, $p=M\left( \rho ^{\prime }\right) $ and $M=\left( \sigma ^{\prime
}\right) $. So, the last step of the composition is:

\begin{description}
\item[(v)] Let $p=$ $M\left( \rho ^{\prime }\right) $ and $q=M\left( \sigma
^{\prime }\right) $, where $M$ is the projectors onto eigenspaces of $\tilde{%
T}$. 
\end{description}

\section{ Proof of Theorem\thinspace \protect\ref{th:fmin=fr}}

\begin{proof}
\negthinspace \textbf{of Theorem\thinspace \ref{th:fmin=fr}.}\ \ By (\ref%
{NDN=rhoTArho}),
\end{proof}

\begin{eqnarray*}
\mathrm{tr}\,\sqrt{\rho }TA\sqrt{\rho } &=&\mathrm{tr}\,ND_{q}D_{p}N^{%
\dagger } \\
&=&\mathrm{tr\,}D_{q}N^{\dagger }ND_{p} \\
&=&\mathrm{tr\,}D_{q}D_{p},
\end{eqnarray*}%
where the last identity is due to $\left( N^{\dagger }N\right) _{ii}=1$. On
the other hand, (\ref{AT=TA}) implies, 
\begin{equation}
T^{2}-\left( TA\right) ^{2}=T^{2}-TAA^{\dagger }T=T\left( 1-AA^{\dagger
}\right) T\geq 0.  \label{TAT>0}
\end{equation}%
where the last inequality is due to $\left\Vert A\right\Vert \leq 1$. Since $%
\sqrt{t}$ is operator monotone, 
\begin{equation*}
TA\leq T.
\end{equation*}%
Therefore, 
\begin{eqnarray*}
F_{\min }\left( \rho ,\sigma \right)  &=&\max_{p,q:\text{(\ref{sum=rho})}%
}\sum_{x\in \mathcal{X}}\sqrt{p\left( x\right) q\left( x\right) } \\
&=&\mathrm{tr\,}D_{q}D_{p} \\
&=&\mathrm{tr}\,\sqrt{\rho }TA\sqrt{\rho } \\
&\leq &\mathrm{tr}\,\sqrt{\rho }T\sqrt{\rho } \\
&=&\mathrm{tr}\,\rho \sqrt{\rho ^{-1/2}\sigma \rho ^{-1/2}}.
\end{eqnarray*}%
The inequality is achieved when $A=\mathbf{1}$, which corresponds to minimal
reverse tests.

\section{Seeing from `behind'}

By Theorem\thinspace \ref{th:fmin=fr}\ ,,

\begin{equation*}
F_{\min }\left( \rho ,\sigma \right) =\mathrm{tr}\,W_{\rho }W_{\sigma
}^{\dagger },
\end{equation*}%
where $W_{\rho }:=\sqrt{\rho }$ and $W_{\sigma }:=\sqrt{\rho }T$, with $T$
being as of (\ref{Tdef}). Observe 
\begin{equation*}
W_{\rho }W_{\sigma }^{\dagger }=W_{\sigma }W_{\rho }^{\dagger }.
\end{equation*}%
Therefore, by (\ref{F=trWW}), we have 
\begin{equation}
F_{\min }\left( \rho ,\sigma \right) =F\left( \rho ^{\prime },\sigma
^{\prime }\right) ,  \label{F=Fmin}
\end{equation}%
where%
\begin{equation*}
\rho ^{\prime }:=W_{\rho }^{\dagger }W_{\rho }=\rho ,\,\,\,\sigma ^{\prime
}:=W_{\sigma }^{\dagger }W_{\sigma }=T\rho T.
\end{equation*}

A meaning of (\ref{F=Fmin}) is given in the sequel. Letting

\begin{equation*}
W_{\rho }=\sum_{i=1}^{d}\sum_{x\in \mathcal{X}}w_{\rho ,i\,x}\left\vert
i\right\rangle \left\langle e_{x}\right\vert ,\,\,W_{\sigma
}=\sum_{i=1}^{d}\sum_{x\in \mathcal{X}}w_{\sigma ,i\,x}\left\vert
i\right\rangle \left\langle e_{x}\right\vert ,
\end{equation*}%
define

\begin{equation*}
\left\vert W_{\rho }\right\rangle :=\sum_{i=1}^{d}\sum_{x\in \mathcal{X}%
}w_{\rho ,i\,x}\left\vert i\right\rangle \left\vert e_{x}\right\rangle
,\,\left\vert W_{\sigma }\right\rangle :=\sum_{i=1}^{d}\sum_{x\in \mathcal{X}%
}w_{\sigma ,i\,x}\left\vert i\right\rangle \left\vert e_{x}\right\rangle .
\end{equation*}%
Then, one can easily check

\begin{eqnarray*}
\rho &=&\mathrm{tr}\,_{\mathcal{H}^{\prime }}\left\vert W_{\rho
}\right\rangle \left\langle W_{\rho }\right\vert ,\,\,\,\sigma :=\mathrm{tr}%
\,_{\mathcal{H}^{\prime }}\left\vert W_{\sigma }\right\rangle \left\langle
W_{\sigma }\right\vert \\
\rho ^{\prime } &=&\mathrm{tr}\,_{\mathcal{H}}\left\vert W_{\rho
}\right\rangle \left\langle W_{\rho }\right\vert ,\,\,\,\sigma ^{\prime }=%
\mathrm{tr}\,_{\mathcal{H}}\left\vert W_{\sigma }\right\rangle \left\langle
W_{\sigma }\right\vert
\end{eqnarray*}%
hold. Hence, $F_{\min }\left( \rho ,\sigma \right) $ equals fidelity of
`hidden' part of the purification of $\rho $ and $\sigma $.

\section{$F_{\min }$ for pure states}

Any reverse test $\left( \Phi ,\left\{ p,q\right\} \right) $ of $\left\{
\rho ,\left\vert \varphi \right\rangle \right\} $ is in the following form: 
\begin{eqnarray}
\Phi \left( \delta _{x}\right)  &=&\left\vert \varphi \right\rangle
\left\langle \varphi \right\vert ,\,\,x\in \mathrm{supp}\,q\,,  \notag \\
\rho  &=&\Phi \left( p\right) =c\left\vert \varphi \right\rangle
\left\langle \varphi \right\vert +\sum_{x\notin \mathrm{supp}\,q\,}p\left(
x\right) \Phi \left( \delta _{x}\right) ,  \label{reverse-pure}
\end{eqnarray}%
where $c:=\sum_{x\in \mathrm{supp}\,q\,}p\left( x\right) $. Therefore, by
monotonicity of Fidelity by CPTP maps, 
\begin{equation*}
\sum_{x\in \mathcal{X}}\sqrt{p\left( x\right) q\left( x\right) }\leq \sqrt{%
c\cdot 1}+\sqrt{c\cdot 0}=\sqrt{c},
\end{equation*}%
and the inequality is achieved by the following $q\left( x\right) $ and $%
p\left( x\right) $:%
\begin{equation*}
q\left( x\right) =\delta _{x_{0}},\,p\left( x\right) =c\delta
_{x_{0}}\,\,(x\in \mathrm{supp}\,q),
\end{equation*}%
where $x_{0}$ is a point in $\mathrm{supp}\,q$. 

Therefore, we maximize $c$ with $\rho -c\left\vert \varphi \right\rangle
\left\langle \varphi \right\vert \geq 0$,    or equivalently, if $\left\vert
\varphi \right\rangle \in \mathrm{supp}\,\rho $, 
\begin{equation*}
\mathbf{1}-c\rho ^{-1/2}\left\vert \varphi \right\rangle \left\langle
\varphi \right\vert \rho ^{-1/2}\geq 0.
\end{equation*}%
Therefore, if $\left\vert \varphi \right\rangle \in \mathrm{supp}\,\rho $,  
\begin{eqnarray*}
F_{\min }\left( \rho ,\left\vert \varphi \right\rangle \right)  &=&\sqrt{%
\left( \mathrm{tr}\,\rho ^{-1/2}\left\vert \varphi \right\rangle
\left\langle \varphi \right\vert \rho ^{-1/2}\right) ^{-1}} \\
&=&\left\Vert \sqrt{\rho }^{-1}\left\vert \varphi \right\rangle \right\Vert
^{-1}.
\end{eqnarray*}

In case $\left\vert \varphi \right\rangle \notin \mathrm{supp}\,\,\rho $,
the maximum of $c$ with $\rho -c\left\vert \varphi \right\rangle
\left\langle \varphi \right\vert \geq 0$ is zero, and%
\begin{equation*}
F_{\min }\left( \rho ,\left\vert \varphi \right\rangle \right) =0.
\end{equation*}%
In particular, 
\begin{equation}
F_{\min }\left( \left\vert \psi \right\rangle ,\left\vert \varphi
\right\rangle \right) =0.  \label{F-pure}
\end{equation}

\section{Properties of $F_{\min }$}

\begin{proposition}
\label{prop:no-triangle}$F_{\min }\left( \rho ,\sigma \right) $ does not
satisfy triangle inequalities: there is $\rho $, $\sigma $, and $\tau $ with%
\begin{eqnarray*}
\cos ^{-1}F_{\min }\left( \rho ,\sigma \right)  &>&\cos ^{-1}F_{\min }\left(
\rho ,\tau \right) +\cos ^{-1}F_{\min }\left( \tau ,\sigma \right) , \\
F_{\min }\left( \rho ,\sigma \right)  &<&2F_{\min }\left( \rho ,\tau \right)
F_{\min }\left( \tau ,\sigma \right) -1.
\end{eqnarray*}
\end{proposition}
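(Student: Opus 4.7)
The strategy is to exploit the discontinuity of $F_{\min}$ at pure states established in the previous section: $F_{\min}(|\psi\rangle\langle\psi|, |\varphi\rangle\langle\varphi|) = 0$ whenever the two pure states are distinct, while $F_{\min}(\tau, |\varphi\rangle\langle\varphi|) = \|\sqrt{\tau}^{-1}|\varphi\rangle\|^{-1}$ whenever $|\varphi\rangle \in \mathrm{supp}\,\tau$, a quantity which can be made arbitrarily close to $1$ by choosing $\tau$ sharply peaked along $|\varphi\rangle$. A single counterexample in which $\rho$ and $\sigma$ are distinct pure states and $\tau$ is a nearly-pure mixed state should therefore defeat both inequalities simultaneously.

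Concretely, I would work in $\mathbb{C}^{2}$, take $\rho = |0\rangle\langle 0|$, $\sigma = |\phi_{\theta}\rangle\langle \phi_{\theta}|$ with $|\phi_{\theta}\rangle := \cos\theta\,|0\rangle + \sin\theta\,|1\rangle$, and $\tau_{\epsilon} := (1-\epsilon)|0\rangle\langle 0| + \epsilon|1\rangle\langle 1|$ for small $\theta,\epsilon > 0$. Since $\rho$ and $\sigma$ are distinct pure states, (\ref{F-pure}) gives $F_{\min}(\rho,\sigma) = 0$, hence $\cos^{-1}F_{\min}(\rho,\sigma) = \pi/2$. Substituting into the pure-vs-mixed formula from the preceding section yields
\begin{equation*}
F_{\min}(\rho,\tau_{\epsilon}) = \sqrt{1-\epsilon}, \qquad F_{\min}(\tau_{\epsilon},\sigma) = \left(\frac{\cos^{2}\theta}{1-\epsilon} + \frac{\sin^{2}\theta}{\epsilon}\right)^{-1/2}.
\end{equation*}

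I would then let $\theta$ and $\epsilon$ both tend to zero subject to $\sin^{2}\theta/\epsilon \to 0$ (for example $\epsilon = 10^{-2}$, $\theta = 10^{-3}$). Both $F_{\min}(\rho,\tau_{\epsilon})$ and $F_{\min}(\tau_{\epsilon},\sigma)$ then approach $1$, so the right-hand side of the first inequality, $\cos^{-1}F_{\min}(\rho,\tau_{\epsilon}) + \cos^{-1}F_{\min}(\tau_{\epsilon},\sigma)$, becomes arbitrarily small, strictly below the fixed value $\pi/2$ on the left: the first triangle inequality is violated. In the same regime, $2F_{\min}(\rho,\tau_{\epsilon})F_{\min}(\tau_{\epsilon},\sigma) - 1$ tends to $1$, which strictly exceeds $F_{\min}(\rho,\sigma) = 0$: the second inequality is violated by the same triple.

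The only point needing attention is the order of limits: one must let $\sin^{2}\theta$ be much smaller than $\epsilon$, not merely comparable to it, so that both $F_{\min}$ values with $\tau_{\epsilon}$ genuinely approach $1$. There is no real analytic obstacle — the failure is dictated by the pure-state identity $F_{\min}(|\psi\rangle,|\varphi\rangle) = 0$, which opens a $\pi/2$ gap on the left that no nearly-equal pair of terms on the right can close.
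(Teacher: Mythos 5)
Your proposal is correct and follows essentially the same route as the paper: both arguments pit the identity $F_{\min}(|\psi\rangle,|\varphi\rangle)=0$ for distinct pure states against the explicit pure-versus-mixed formula $F_{\min}(\tau,|\varphi\rangle)=\Vert\sqrt{\tau}^{-1}|\varphi\rangle\Vert^{-1}$, choosing a nearly pure diagonal $\tau$ so that both terms on the right tend to $1$ while the left-hand side stays at $0$. The paper merely uses a slightly different (symmetric, one-parameter) choice of the two pure states and of $\tau$, which avoids your separate tuning of $\epsilon$ against $\theta$, but the counterexample is the same in substance.
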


\begin{proof}
We let 
\begin{eqnarray*}
\rho &=&\left\vert \psi \right\rangle \left\langle \psi \right\vert ,\sigma
=\left\vert \varphi \right\rangle \left\langle \varphi \right\vert ,\, \\
\left\vert \psi \right\rangle &=&\left[ 
\begin{array}{c}
\cos \frac{\theta }{2} \\ 
\sin \frac{\theta }{2}%
\end{array}%
\right] ,\left\vert \varphi \right\rangle =\left[ 
\begin{array}{c}
\cos \frac{\theta }{2} \\ 
-\sin \frac{\theta }{2}%
\end{array}%
\right] , \\
\tau &=&\frac{1}{\left\vert \cos \frac{\theta }{2}\right\vert +\left\vert
\sin \frac{\theta }{2}\right\vert }\left[ 
\begin{array}{cc}
\left\vert \cos \frac{\theta }{2}\right\vert & 0 \\ 
0 & \left\vert \sin \frac{\theta }{2}\right\vert%
\end{array}%
\right] \,.
\end{eqnarray*}%
Then, 
\begin{eqnarray*}
F_{\min }\left( \left\vert \psi \right\rangle ,\left\vert \varphi
\right\rangle \right) &=&0, \\
F_{\min }\left( \left\vert \psi \right\rangle ,\tau \right) &=&F_{\min
}\left( \left\vert \varphi \right\rangle ,\tau \right) =\frac{1}{\left\vert
\cos \frac{\theta }{2}\right\vert +\left\vert \sin \frac{\theta }{2}%
\right\vert }.
\end{eqnarray*}%
Hence, letting $\theta $ be small enough, we have asserted inequalities. (In
fact, the inequality is satisfied all $\theta $ lying between $0$ and $\frac{%
\pi }{2}$.)
\end{proof}

\begin{proposition}
$F_{\min }\left( \rho ,\sigma \right) =F_{\min }\left( \sigma ,\rho \right) $%
, $F_{\min }\left( \rho ^{\otimes n},\sigma ^{\otimes n}\right) =F_{\min
}\left( \rho ,\sigma \right) ^{n}$. \ 
\end{proposition}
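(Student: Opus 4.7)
The plan is to reduce both properties to Theorem~\ref{th:fmin=fr}, which supplies the explicit formula $F_{\min}(\rho,\sigma)=\mathrm{tr}\,\rho\sqrt{\rho^{-1/2}\sigma\rho^{-1/2}}=\mathrm{tr}(\rho\#\sigma)$, together with the definition (\ref{def:fmin}) when one needs to handle the singular case.

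For symmetry, I would argue directly from the definition without appealing to the formula. Observe that $(\Phi,\{p,q\})$ is a reverse test of $\{\rho,\sigma\}$ if and only if $(\Phi,\{q,p\})$ is a reverse test of $\{\sigma,\rho\}$: the conditions $\Phi(p)=\rho,\Phi(q)=\sigma$ simply swap roles. Since classical fidelity is symmetric, $F(p,q)=F(q,p)$, so the two suprema in (\ref{def:fmin}) coincide. (Equivalently, for $\rho,\sigma>0$ one can cite $\rho\#\sigma=\sigma\#\rho$ from \cite{Ando}.) This disposes of the symmetry claim with no case analysis.

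For multiplicativity, I would split into the two inequalities. The $\geq$ direction follows from the reverse-test construction: if $(\Phi,\{p,q\})$ achieves $F_{\min}(\rho,\sigma)$, then the product $(\Phi^{\otimes n},\{p^{\otimes n},q^{\otimes n}\})$ is a reverse test of $\{\rho^{\otimes n},\sigma^{\otimes n}\}$, and classical fidelity is multiplicative, giving $F(p^{\otimes n},q^{\otimes n})=F(p,q)^n=F_{\min}(\rho,\sigma)^n$. The $\leq$ direction uses the formula: since $\rho$ and $\rho^{-1/2}\sigma\rho^{-1/2}$ tensor cleanly across factors,
\begin{equation*}
(\rho^{\otimes n})^{-1/2}\sigma^{\otimes n}(\rho^{\otimes n})^{-1/2}=(\rho^{-1/2}\sigma\rho^{-1/2})^{\otimes n},
\end{equation*}
and the functional calculus for $\sqrt{\cdot}$ commutes with tensor powers, so the square root on the left equals $(\sqrt{\rho^{-1/2}\sigma\rho^{-1/2}})^{\otimes n}$. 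Multiplicativity of $\mathrm{tr}$ on tensor products then yields $F_{\min}(\rho^{\otimes n},\sigma^{\otimes n})=F_{\min}(\rho,\sigma)^n$.

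The only real subtlety is the strict-positivity hypothesis of Theorem~\ref{th:fmin=fr}. This is not a genuine obstacle: $\rho^{\otimes n}$ is strictly positive precisely when $\rho$ is, so the formula argument goes through unchanged on the full-rank locus. For the degenerate case one either invokes continuity of $F_{\min}$ as a supremum of continuous functionals, or restricts to the support of $\rho$ and $\sigma$ exactly as in the pure-state discussion preceding Proposition~\ref{prop:no-triangle}. Since the $\geq$ inequality above is proved with no positivity assumption whatsoever, the only thing continuity is needed for is the matching upper bound in the rank-deficient case.
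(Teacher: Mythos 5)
Your argument is correct, and it actually supplies more than the paper does: the paper's entire proof of this proposition is the single line ``Trivial by definition.'' Your symmetry argument (a reverse test $(\Phi,\{p,q\})$ of $\{\rho,\sigma\}$ is a reverse test $(\Phi,\{q,p\})$ of $\{\sigma,\rho\}$, and $F(p,q)=F(q,p)$) and your $\geq$ half of multiplicativity (tensoring an optimal reverse test with itself and using $F(p^{\otimes n},q^{\otimes n})=F(p,q)^{n}$) are exactly the content that is genuinely definitional. You are right to observe that the $\leq$ half of multiplicativity is \emph{not} a formal consequence of (\ref{def:fmin}) alone --- a priori a correlated reverse test of $\{\rho^{\otimes n},\sigma^{\otimes n}\}$ could outperform the product of single-copy tests --- and closing it via the closed formula of Theorem~\ref{th:fmin=fr}, together with $\sqrt{A^{\otimes n}}=(\sqrt{A})^{\otimes n}$ and multiplicativity of the trace over tensor factors, is the right move; the paper's ``trivial'' implicitly leans on the same formula. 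The one soft spot is your handling of the rank-deficient case: a supremum of continuous functionals is only lower semicontinuous, so ``continuity of $F_{\min}$'' does not come for free (note that $F_{\min}$ drops to $0$ on pairs of distinct pure states, so its boundary behaviour is genuinely delicate), and you would need either an upper-semicontinuity argument or an explicit support-restriction computation there. Since Theorem~\ref{th:fmin=fr} is stated only for strictly positive states and the paper offers no argument at all for the degenerate case, this gap is shared with the source, but it is worth acknowledging rather than waving at continuity.
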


\begin{proof}
Trivial by definition.
\end{proof}

\begin{theorem}
\label{th:strong-concave}(Strong concavity) 
\begin{equation*}
F_{\min }\left( \sum_{y\in \mathcal{Y}}\lambda _{y}\rho _{y},\sum_{y\in 
\mathcal{Y}}\mu _{y}\sigma _{y}\right) \geq \sum_{y\in \mathcal{Y}}\sqrt{%
\lambda _{y}\mu _{y}}F_{\min }\left( \rho _{y},\sigma _{y}\right)
\end{equation*}%
{}
\end{theorem}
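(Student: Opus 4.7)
The plan is to construct an explicit reverse test for the mixed pair $\bigl\{\sum_y\lambda_y\rho_y,\sum_y\mu_y\sigma_y\bigr\}$ by gluing together optimal reverse tests for each component pair $\{\rho_y,\sigma_y\}$, and then to read off the desired bound directly from the definition (\ref{def:fmin}) of $F_{\min}$ together with the classical identity $F(P,Q)=\sum_{x,y}\sqrt{P(x,y)Q(x,y)}$.

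First, for each $y\in\mathcal{Y}$, I would invoke Theorem \ref{th:fmin=fr} (or, in the boundary cases where $\rho_y$ is not strictly positive, a limiting/closure argument, or simply pass to a near-optimal reverse test and take a supremum) to fix a reverse test $(\Phi_y,\{p_y,q_y\})$ of $\{\rho_y,\sigma_y\}$ with $p_y,q_y$ supported on some set $\mathcal{X}_y$ and
\begin{equation*}
\Phi_y(p_y)=\rho_y,\qquad \Phi_y(q_y)=\sigma_y,\qquad F(p_y,q_y)=F_{\min}(\rho_y,\sigma_y).
\end{equation*}

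Second, I would package them into a single reverse test. Let $\widetilde{\mathcal{X}}:=\bigsqcup_{y\in\mathcal{Y}}\mathcal{X}_y$, and define $\widetilde{\Phi}$ on distributions over $\widetilde{\mathcal{X}}$ by $\widetilde{\Phi}(\delta_{(x,y)}):=\Phi_y(\delta_x)$. Put $P(x,y):=\lambda_y p_y(x)$ and $Q(x,y):=\mu_y q_y(x)$. Direct computation gives $\widetilde{\Phi}(P)=\sum_y\lambda_y\rho_y$ and $\widetilde{\Phi}(Q)=\sum_y\mu_y\sigma_y$, so $(\widetilde{\Phi},\{P,Q\})$ is a legitimate reverse test of the mixed pair. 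By the definition (\ref{def:fmin}),
\begin{equation*}
F_{\min}\Bigl(\textstyle\sum_y\lambda_y\rho_y,\sum_y\mu_y\sigma_y\Bigr)\;\geq\;F(P,Q)\;=\;\sum_{y}\sqrt{\lambda_y\mu_y}\,F(p_y,q_y)\;=\;\sum_y\sqrt{\lambda_y\mu_y}\,F_{\min}(\rho_y,\sigma_y),
\end{equation*}
which is the claim.

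I do not expect a serious obstacle: the construction is literally the classical one, and all the work has been done upstream in setting up $F_{\min}$ through reverse tests. The one minor care-point is that Theorem \ref{th:fmin=fr} is stated only for strictly positive states, so if some $\rho_y$ is rank-deficient I would either reduce to the support of $\rho_y$ (or $\rho_y+\sigma_y$), perturb by $\varepsilon\mathbf{1}$ and let $\varepsilon\to 0$ using continuity of the explicit formula $\mathrm{tr}\,\rho\sqrt{\rho^{-1/2}\sigma\rho^{-1/2}}$, or simply choose near-optimal reverse tests and take a supremum; none of this changes the structural argument above.
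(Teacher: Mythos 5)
Your proposal is correct and follows essentially the same route as the paper: the paper also glues the optimal reverse tests $(\Phi_y,\{p_y,q_y\})$ into a single reverse test over $\mathcal{X}\times\mathcal{Y}$ with distributions $\tilde p(x,y)=\lambda_y p_y(x)$, $\tilde q(x,y)=\mu_y q_y(x)$, the only cosmetic difference being that it invokes the classical strong joint concavity (\ref{strong-concave}) plus properties (N) and (M) where you compute $F(P,Q)$ directly from disjointness of supports and appeal to the definition (\ref{def:fmin}). Your extra care about attainment of the optimum when some $\rho_y$ is rank-deficient is a reasonable refinement that the paper's proof silently skips.
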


\begin{proof}
Let $\left( \Phi _{y},\left\{ p_{y},q_{y}\right\} \right) $ be a reverse
test of $\left\{ \rho _{y},\sigma _{y}\right\} $ with 
\begin{equation*}
F\left( p_{y},q_{y}\right) =F_{\min }\left( \rho _{y},\sigma _{y}\right) ,
\end{equation*}%
where $p_{y}$, $q_{y}$ are probability distributions over $\mathcal{X}$ with 
$\left\vert \mathcal{X}\right\vert =d^{\prime }=d$ (minimal). Define $\tilde{%
p}_{y_{0}}\left( x,y\right) :=p_{y_{0}}\left( x\right) \delta _{y_{0}}\left(
y\right) $, $\tilde{q}_{y_{0}}\left( x,y\right) =q_{y_{0}}\left( x\right)
\delta _{y_{0}}\left( y\right) $, and $\tilde{\Phi}\left( \delta _{\left(
x,y\right) }\right) :=\Phi _{y}\left( \delta _{x}\right) $. Then, 
\begin{eqnarray*}
\tilde{\Phi}\left( \tilde{p}_{y_{0}}\right)  &=&\sum_{x\in \mathcal{X},y\in 
\mathcal{Y}}p_{y_{0}}\left( x\right) \delta _{y_{0}}\left( y\right) \Lambda
_{y}\left( \delta _{x}\right) =\rho _{y_{0}}, \\
\tilde{\Phi}\left( \tilde{q}_{y_{0}}\right)  &=&\sigma _{y_{0}}, \\
F\left( \tilde{p}_{y_{0}},\tilde{q}_{y_{0}}\right)  &=&\sum_{x\in \mathcal{X}%
,y\in \mathcal{Y}}\sqrt{p_{y_{0}}\left( x\right) q_{y_{0}}\left( x\right) }%
\delta _{y_{0}}\left( y\right)  \\
&=&F\left( p_{y_{0}},q_{y_{0}}\right) =F_{\min }\left( \rho _{y_{0}},\sigma
_{y_{0}}\right) .
\end{eqnarray*}%
Therefore, 
\begin{eqnarray*}
&&\sum_{y\in \mathcal{Y}}\sqrt{\lambda _{y}\mu _{y}}F_{\min }\left( \rho
_{y},\sigma _{y}\right) =\sum_{y\in \mathcal{Y}}\sqrt{\lambda _{y}\mu _{y}}%
F\left( \tilde{p}_{y},\tilde{q}_{y}\right)  \\
&\leq &F\left( \sum_{y\in \mathcal{Y}}\lambda _{y}\tilde{p}_{y},\sum_{y\in 
\mathcal{Y}}\mu _{y}\tilde{q}_{y}\right) \leq F_{\min }\left( \tilde{\Phi}%
\left( \sum_{y\in \mathcal{Y}}\lambda _{y}\tilde{p}_{y}\right) ,\tilde{\Phi}%
\left( \sum_{y\in \mathcal{Y}}\mu _{y}\tilde{q}_{y}\right) \right)  \\
&=&F_{\min }\left( \sum_{y\in \mathcal{Y}}\lambda _{y}\rho _{y},\sum_{y\in 
\mathcal{Y}}\mu _{y}\sigma _{y}\right) ,
\end{eqnarray*}%
where the second line is due to (\ref{strong-concave}) for classical
fidelity, and the third and the fourth line is due to (N) and (M),
respectively.
\end{proof}

\begin{remark}
Alternative proof is given using (\ref{F=mean}) and the following property
of $\#\,$\cite{Ando}: 
\begin{equation*}
A\#C+B\#D\leq \left( A+B\right) \#\left( C+D\right) .
\end{equation*}
\end{remark}

\section{Generalization to $F_{f}$}

As noted before, both of the minimum of $\mathrm{D}\left( p||q\right) $ \
and the maximum of $F\left( p,q\right) $ are achieved by minimal ones. This
section tries generalization of 

Let $f$ be operator monotone function on $[0,\infty )$. Then one can define%
\begin{equation*}
F_{f}\left( p,q\right) :=\sum_{x\in \mathcal{X}}p\left( x\right) \,f\left( 
\frac{q\left( x\right) }{p\left( x\right) }\right) .
\end{equation*}%
An example is $f\left( t\right) =t^{\alpha }$ ($0<\alpha <1$), 
\begin{equation*}
F_{\alpha }\left( p,q\right) =\sum_{x\in \mathcal{X}}p^{1-\alpha }\left(
x\right) q^{\alpha }\left( x\right) .
\end{equation*}%
Note 
\begin{equation*}
\mathrm{D}\left( p||q\right) =\lim_{\alpha \rightarrow 1}\frac{-1}{1-\alpha }%
\ln F_{1-\alpha }\left( p,q\right) .
\end{equation*}%
Thus, this family interpolates between $\mathrm{D}\left( p||q\right) $ and $%
F\left( p,q\right) $. 

Their quantum analogue is defined as follows. 
\begin{equation*}
F_{f}^{\min }\left( \rho ,\sigma \right) =\max F_{f}\left( p,q\right) 
\end{equation*}%
here the maximum is taken over all the reverse tests $\left( \Phi ,\left\{
p,q\right\} \right) $ of $\left\{ \rho ,\sigma \right\} $. 

\begin{theorem}
\label{th:FQf>Ff}If $F_{f}^{Q}\left( \rho ,\sigma \right) $ satisfies $%
F_{f}^{Q}\left( p,q\right) =F_{f}\left( p,q\right) $ for all probability
distributions $p$,$q$, and monotone increasing by any CPTP map, 
\begin{equation*}
F_{f}^{\min }\left( \rho ,\sigma \right) \leq F_{f}^{Q}\left( \rho ,\sigma
\right) .
\end{equation*}
\end{theorem}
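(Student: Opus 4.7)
The plan is to mimic exactly the lower-bound half of the proof of Theorem~\ref{th:fmin}, since the statement asked for is simply the $F_f$-analogue of the inequality $F_{\min}(\rho,\sigma) \leq F^Q(\rho,\sigma)$. The only inputs used there were monotonicity under CPTP maps and the fact that the quantity in question reduces to the classical $F$ on probability distributions. Both of these are hypotheses of Theorem~\ref{th:FQf>Ff}, so nothing genuinely new is needed.

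Concretely, I would fix a reverse test $(\Phi,\{p,q\})$ of $\{\rho,\sigma\}$, i.e.\ a CPTP map $\Phi$ with $\Phi(p)=\rho$ and $\Phi(q)=\sigma$. By the assumed monotonicity of $F_f^Q$ under CPTP maps (in the sense that it increases along $\Phi$, consistently with the convention used for $F$ in this paper), one has
\begin{equation*}
F_f^Q(\rho,\sigma) \;=\; F_f^Q\bigl(\Phi(p),\Phi(q)\bigr) \;\geq\; F_f^Q(p,q).
\end{equation*}
The normalization hypothesis $F_f^Q(p,q)=F_f(p,q)$ on probability distributions then converts the right-hand side into the classical quantity $F_f(p,q)$. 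Taking the supremum over all reverse tests $(\Phi,\{p,q\})$ of $\{\rho,\sigma\}$ yields
\begin{equation*}
F_f^Q(\rho,\sigma) \;\geq\; \sup_{(\Phi,\{p,q\})} F_f(p,q) \;=\; F_f^{\min}(\rho,\sigma),
\end{equation*}
which is the desired inequality.

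There is essentially no obstacle here. The only minor thing to watch is the direction of monotonicity: the paper writes monotonicity of fidelity-like objects as $F(\rho,\sigma)\leq F(\Lambda(\rho),\Lambda(\sigma))$ (see~(\ref{monotone})), so the statement of Theorem~\ref{th:FQf>Ff} uses the same convention, and the chain of (in)equalities above is applied in the correct direction. No further structure of $f$ (beyond what is needed to define $F_f$) is invoked, and no explicit formula for $F_f^{\min}$ is required at this stage. In this sense the theorem is a formal corollary of the reverse-test construction itself, exactly as in the $F$ case.
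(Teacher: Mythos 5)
Your argument is exactly the one the paper intends: it declares the proof ``almost parallel with the proof of Theorem~\ref{th:fmin}'' and omits it, and your chain of inequalities reproduces the lower-bound half of that proof verbatim (monotonicity along the reverse test, then the classical normalization, then the supremum over reverse tests). The proposal is correct and matches the paper's approach.
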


\begin{proof}
Almost parallel with the proof of Theorem\thinspace \ref{th:fmin}, thus
omitted.
\end{proof}

\begin{theorem}
\begin{equation*}
F_{f}^{\min }\left( \lambda \rho _{0}+\left( 1-\lambda \right) \rho
_{1},\lambda \sigma _{0}+\left( 1-\lambda \right) \sigma _{1}\right) \geq
\lambda F_{f}^{\min }\left( \rho _{0},\sigma _{0}\right) +\left( 1-\lambda
\right) F_{f}^{\min }\left( \rho _{1},\sigma _{1}\right) .
\end{equation*}
\end{theorem}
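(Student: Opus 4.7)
The plan is to mimic the structure of the proof of Theorem~\ref{th:strong-concave}, taking advantage of the fact that $F_{f}^{\min }$ is defined as a supremum over reverse tests: to lower-bound $F_{f}^{\min }$ on the convex combination, it suffices to exhibit a single reverse test whose classical $F_{f}$-value equals the right-hand side.

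Fix reverse tests $\left( \Phi _{i},\left\{ p_{i},q_{i}\right\} \right) $ of $\left\{ \rho _{i},\sigma _{i}\right\} $ for $i=0,1$ that achieve $F_{f}\left( p_{i},q_{i}\right) =F_{f}^{\min }\left( \rho _{i},\sigma _{i}\right) $, with $p_{i},q_{i}$ supported on a common set $\mathcal{X}$. Enlarging the output alphabet to $\mathcal{X}\times \{0,1\}$, define $\tilde{p}_{i}\left( x,y\right) :=p_{i}\left( x\right) \delta _{i}\left( y\right) $, $\tilde{q}_{i}\left( x,y\right) :=q_{i}\left( x\right) \delta _{i}\left( y\right) $, and $\tilde{\Phi}\left( \delta _{\left( x,y\right) }\right) :=\Phi _{y}\left( \delta _{x}\right) $. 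Then $\tilde{\Phi}\left( \lambda \tilde{p}_{0}+\left( 1-\lambda \right) \tilde{p}_{1}\right) =\lambda \rho _{0}+\left( 1-\lambda \right) \rho _{1}$ and similarly for the $\sigma $'s, so this is a valid reverse test of the convex combinations.

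The key observation is that $\tilde{p}_{0},\tilde{p}_{1}$ (and $\tilde{q}_{0},\tilde{q}_{1}$) live on disjoint slices $y=0$ and $y=1$. Hence, inside the defining sum of $F_{f}$, on each $\left( x,y\right) $ at most one of the two summands contributes to numerator and denominator, and the ratio $\left( \lambda \tilde{q}_{0}+\left( 1-\lambda \right) \tilde{q}_{1}\right) /\left( \lambda \tilde{p}_{0}+\left( 1-\lambda \right) \tilde{p}_{1}\right) $ reduces on the $y$-slice to $q_{y}\left( x\right) /p_{y}\left( x\right) $, with prefactor $\lambda $ or $1-\lambda $. Splitting the sum over $y$ therefore yields the exact identity
\begin{equation*}
F_{f}\left( \lambda \tilde{p}_{0}+\left( 1-\lambda \right) \tilde{p}_{1},\lambda \tilde{q}_{0}+\left( 1-\lambda \right) \tilde{q}_{1}\right) =\lambda F_{f}\left( p_{0},q_{0}\right) +\left( 1-\lambda \right) F_{f}\left( p_{1},q_{1}\right) .
\end{equation*}

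Combining these two ingredients with the definition of $F_{f}^{\min }$ as the supremum over reverse tests gives
\begin{equation*}
F_{f}^{\min }\left( \lambda \rho _{0}+\left( 1-\lambda \right) \rho _{1},\lambda \sigma _{0}+\left( 1-\lambda \right) \sigma _{1}\right) \geq \lambda F_{f}^{\min }\left( \rho _{0},\sigma _{0}\right) +\left( 1-\lambda \right) F_{f}^{\min }\left( \rho _{1},\sigma _{1}\right) ,
\end{equation*}
as desired. There is essentially no hard step; the only subtle point is that, unlike in Theorem~\ref{th:strong-concave} where one invokes the classical strong joint concavity (\ref{strong-concave}), here the disjoint-support construction makes even plain joint concavity of $F_{f}$ unnecessary because the inequality collapses to equality on the classical side. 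In particular, no hypothesis beyond operator monotonicity (or even continuity) of $f$ is needed for this step.
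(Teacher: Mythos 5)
Your proof is correct and is precisely the reconstruction the paper intends: the paper omits this proof as ``almost parallel'' to Theorem~\ref{th:strong-concave}, and your disjoint-slice reverse test $\tilde{\Phi}$ together with the exact identity $F_{f}(\lambda\tilde{p}_{0}+(1-\lambda)\tilde{p}_{1},\lambda\tilde{q}_{0}+(1-\lambda)\tilde{q}_{1})=\lambda F_{f}(p_{0},q_{0})+(1-\lambda)F_{f}(p_{1},q_{1})$ is exactly that parallel argument. Your added remark that the classical-side step collapses to an equality here (so no concavity of $F_{f}$ is needed) is a correct and worthwhile observation.
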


\begin{proof}
Almost parallel with the proof of Theorem\thinspace \ref{th:strong-concave},
thus omitted.
\end{proof}

Also, we define 
\begin{equation*}
F_{f}^{\prime }\left( \rho ,\sigma \right) :=\mathrm{tr}\,\rho ^{\frac{1}{2}%
}\,f\left( T^{2}\right) \rho ^{\frac{1}{2}},
\end{equation*}%
where $T$ is as of \ref{Tdef}. 

Due to the main result (the equation (3.8) ) of \cite{Ando}, one find an
operator connection $\natural $ with%
\begin{equation*}
\rho ^{\frac{1}{2}}\,f\left( T^{2}\right) \rho ^{\frac{1}{2}}=\rho
\,\natural \,\sigma \,\,\,\left( \rho >0\right) ,\,
\end{equation*}%
thus 
\begin{equation*}
F_{f}^{\prime }\left( \rho ,\sigma \right) =\mathrm{tr}\,\rho \,\natural
\,\sigma \,\,\,\,\left( \rho >0\right) .
\end{equation*}%
Any operator connection $\natural $ satisfies 
\begin{eqnarray*}
S\left( A\natural B\right) S^{\dagger } &\leq &\left( SAS^{\dagger }\right)
\natural \left( SBS^{\dagger }\right) , \\
A\natural B+C\natural D &\leq &\left( A+C\right) \natural \left( B+D\right) ,
\end{eqnarray*}%
(see Theorem\thinspace 3.5 of \cite{Ando}) which implies 
\begin{equation*}
\Lambda \left( A\natural B\right) \leq \Lambda \left( A\right) \natural
\Lambda \left( B\right) ,
\end{equation*}%
for any TPCP map $\Lambda $. This implies 
\begin{equation*}
F_{f}^{\prime }\left( \Lambda \left( \rho \right) ,\Lambda \left( \sigma
\right) \right) \geq F_{f}^{\prime }\left( \rho ,\sigma \right) .
\end{equation*}%
Since $F_{f}^{\prime }\left( p,q\right) =F\left( p,q\right) $, by
Theorem\thinspace \ref{th:FQf>Ff}, 
\begin{equation}
F_{f}^{\min }\left( \rho ,\sigma \right) \leq F_{f}^{\prime }\left( \rho
,\sigma \right) .  \label{F<F'}
\end{equation}

\begin{theorem}
Suppose $\rho >0$. Then, 
\begin{equation*}
F_{f}^{\min }\left( \rho ,\sigma \right) =F_{f}^{\prime }\left( \rho ,\sigma
\right) .
\end{equation*}%
Moreover, this number is achieved by any minimal reverse test.
\end{theorem}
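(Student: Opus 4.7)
The upper bound $F_{f}^{\min}(\rho,\sigma) \leq F_{f}'(\rho,\sigma)$ is already recorded in (\ref{F<F'}), so the remaining task is to exhibit an equality-attaining reverse test. My plan is to take the minimal reverse test obtained from the parameterization in Section\thinspace\ref{sec:list-of-reverse} by setting $A = \mathbf{1}$, and to compute $F_{f}(p,q)$ directly from the formulas in step (v).

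First I would verify that in the minimal case $d' = d$ the choice $A = \mathbf{1}$ is essentially forced. When $d' = d$ the isometry $[A\, A']$ is just $A$, so $A$ must be unitary on $\mathcal{H}$. Combining unitarity with (\ref{AT=TA}) and $T > 0$, a short manipulation (conjugate by $T^{1/2}$ to obtain a positive self-adjoint $B := T^{1/2} A T^{-1/2}$ satisfying $BTB = T$, which then forces $B = \mathbf{1}$) gives $A = \mathbf{1}$, hence $A' = 0$ and $\tilde T = T$. Step (v) of the construction then produces
\begin{equation*}
p(x) = \langle v_x | \rho | v_x \rangle, \qquad q(x) = d_x^{2}\,\langle v_x | \rho | v_x \rangle,
\end{equation*}
where $T = V D V^{\dagger}$ is any spectral decomposition, $|v_x\rangle := V|e_x\rangle$, and $d_x$ is the associated eigenvalue of $T$.

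Since $\rho > 0$, every $p(x)$ is strictly positive, so the ratios $q(x)/p(x) = d_x^{2}$ are unambiguous, and using $f(T^{2}) = V f(D^{2}) V^{\dagger}$ together with cyclicity of the trace,
\begin{equation*}
F_{f}(p,q) = \sum_{x} p(x)\,f\!\left(\tfrac{q(x)}{p(x)}\right) = \sum_{x} \langle v_x|\rho|v_x\rangle\, f(d_x^{2}) = \mathrm{tr}\,\rho\, f(T^{2}) = F_{f}'(\rho,\sigma).
\end{equation*}
Combined with (\ref{F<F'}) this yields $F_{f}^{\min}(\rho,\sigma) = F_{f}'(\rho,\sigma)$. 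The final value is manifestly independent of which diagonalizing $V$ was selected (and $A$ had no freedom at all), so the same computation shows that \emph{every} minimal reverse test attains the maximum.

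The only step requiring any real argument is the uniqueness claim $A = \mathbf{1}$, which is what the "any minimal reverse test" half of the statement rests on; the rest is essentially an unpacking of the parameterization in Section\thinspace\ref{sec:list-of-reverse} composed with elementary spectral calculus. An alternative, more slick route would be to invoke the operator connection $\natural$ introduced just before the theorem and combine it with the identity $F_{f}'(\rho,\sigma) = \mathrm{tr}\,\rho\,\natural\,\sigma$, but the direct spectral computation above is the shortest path to both halves of the conclusion.
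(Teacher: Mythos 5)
Your proposal is correct and follows essentially the same route as the paper: the upper bound is quoted from (\ref{F<F'}), and the lower bound is obtained by evaluating $F_{f}(p,q)$ for a minimal reverse test via the Section\thinspace\ref{sec:list-of-reverse} parameterization, yielding $p(x)=\langle e_x|\rho|e_x\rangle$, $q(x)=\lambda_x^{2}\langle e_x|\rho|e_x\rangle$ and hence $\mathrm{tr}\,\rho f(T^{2})$. The only difference is that you spell out why $A=\mathbf{1}$ is forced when $d'=d$ (the paper leaves this implicit, merely citing "the argument in Section\thinspace\ref{sec:list-of-reverse}"), which is a welcome clarification rather than a deviation.
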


\begin{proof}
By (\ref{F<F'}), we only have to show `$\geq $'. Let $\left( \Phi ,\left\{
p,q\right\} \right) $ be a minimal reverse test. Then, by the argument in
Section\thinspace \ref{sec:list-of-reverse}, $p=M\left( \rho \right) $ and $%
q=M\left( T\rho T\right) $, where $M$ is the projectors onto the
eigenvectors $\left\{ \left\vert e_{x}\right\rangle \right\} $ of $T=$ $%
T=\sum_{x\in \mathcal{X}}\lambda _{x}\left\vert e_{x}\right\rangle $ $%
\left\langle e_{x}\right\vert $, 
\begin{equation*}
p\left( x\right) =\left\langle e_{x}\right\vert \rho \left\vert
e_{x}\right\rangle ,\,\,q\left( x\right) =\lambda _{x}^{2}\left\langle
e_{x}\right\vert \rho \left\vert e_{x}\right\rangle .
\end{equation*}%
Therefore, 
\begin{eqnarray*}
F_{f}^{\min }\left( \rho ,\sigma \right)  &=&\sum_{x\in \mathcal{X}%
}\left\langle e_{x}\right\vert \rho \left\vert e_{x}\right\rangle f\left(
\lambda _{x}^{2}\right)  \\
&=&\sum_{x\in \mathcal{X}}\left\langle e_{x}\right\vert \rho f\left(
T^{2}\right) \left\vert e_{x}\right\rangle =F_{f}^{\prime }\left( \rho
,\sigma \right) .
\end{eqnarray*}%
Therefore, we have $F_{f}\left( \rho ,\sigma \right) \geq F_{f}^{\prime
}\left( \rho ,\sigma \right) $, and the proof is complete.
\end{proof}

\section{RLD Fisher information and tangent reverse estimation}

Given a parameterized family $\left\{ \rho _{t}\right\} $, we define right
logarithmic derivative (\textit{RLD) Fisher information} $J_{t}^{R}$\ by%
\begin{equation*}
J_{t}^{R}:=\mathrm{tr}\,\left( L_{t}^{R}\right) ^{\dagger }L_{t}^{R}\rho
_{t},
\end{equation*}%
where $L_{t}^{R}$ is called right logarithmic derivative and is the unique
solution to the linear equation 
\begin{equation*}
\frac{\mathrm{d}\rho _{t}}{\mathrm{d\,}t}=L_{t}^{R}\rho _{t}.
\end{equation*}%
($L_{t}^{R}$ exists if and only if $\mathrm{supp}\,\frac{\mathrm{d}\rho _{t}%
}{\mathrm{d\,}t}\subset \mathrm{supp}\,\rho _{t}$.)

A triplet $\left( \Lambda ,p_{t},\frac{\mathrm{d}p_{t}}{\mathrm{d\,}t}%
\right) $ is said to be \textit{tangent reverse estimation} of $\rho _{t}$
at $t$ if it satisfies

\begin{equation*}
\Lambda \left( p_{t}\right) =\rho _{t}\,,\,\Lambda \left( \frac{\mathrm{d}%
p_{t}}{\mathrm{d\,}t}\right) =\frac{\mathrm{d}\rho _{t}}{\mathrm{d\,}t}.
\end{equation*}%
With $\Lambda \left( \delta _{x}\right) =\left\vert \varphi
_{x}\right\rangle \left\langle \varphi _{x}\right\vert $, we have \ 
\begin{eqnarray*}
\sum_{x\in \mathcal{X}^{\prime }}p_{t}\left( x\right) \left\vert \varphi
_{x}\right\rangle \left\langle \varphi _{x}\right\vert  &=&NP_{t}N^{\dagger
}=\rho _{t},\, \\
\sum_{x\in \mathcal{X}^{\prime }}\frac{\mathrm{d}p_{t}\left( x\right) }{%
\mathrm{d\,}t}\left\vert \varphi _{x}\right\rangle \left\langle \varphi
_{x}\right\vert  &=&N\frac{\mathrm{d}P_{t}}{\mathrm{d\,}t}N^{\dagger }=\frac{%
\mathrm{d}\rho _{t}}{\mathrm{d\,}t},\,
\end{eqnarray*}%
where $N=[\left\vert \varphi _{1}\right\rangle ,\cdots ,\left\vert \varphi
_{d^{\prime }}\right\rangle ]$ and $P_{t}=\mathrm{diag}\,\left( p_{t}\left(
1\right) ,\cdots ,p\left( d^{\prime }\right) \right) $. When $d^{\prime
}=d=\dim \mathcal{H}$, we say the tangent reverse estimation is \textit{%
minimal}. It is known that 
\begin{equation*}
J_{t}^{R}=\min J_{p_{t}},
\end{equation*}%
where the minimum is taken for all the tangent reverse estimation of $\rho
_{t}$ at $t$. It is also known that the minimum is achieved by any minimal
tangent reverse estimation, and RLD satisfies%
\begin{equation}
L_{t}^{R}=NL_{t}N^{-1}  \label{rld-diagonal}
\end{equation}%
with $L_{t}=\mathrm{diag}\,\left( l_{t}\left( 1\right) ,\cdots ,l_{t}\left(
d\right) \right) $ and $l_{t}=\frac{\mathrm{d}p_{t}/\mathrm{d\,}t}{p_{t}}$.

\section{$F_{\min }$, RLD Fisher information and the shortest distance}

Consider a smooth curve $\left\{ \rho _{t}\right\} $. Suppose $\left\{ \rho
_{t}\right\} $ is lying interior of $\mathcal{S}\left( \mathcal{H}\right) $
we have

\begin{eqnarray*}
&&F_{\min }\left( \rho _{t},\rho _{t+\varepsilon }\right)  \\
&=&\mathrm{tr}\,\rho _{t}\sqrt{\rho _{t}^{-1/2}\rho _{t+\varepsilon }\rho
_{t}^{-1/2}} \\
&=&\mathrm{tr}\,\rho _{t}\sqrt{\mathbf{1}+\varepsilon \rho _{t}^{-1/2}\frac{%
\mathrm{d}\rho _{t}}{\mathrm{d\,}t}\rho _{t}^{-1/2}+\frac{\varepsilon ^{2}}{2%
}\rho _{t}^{-1/2}\frac{\mathrm{d}^{2}\rho _{t}}{\mathrm{d\,}t^{2}}\rho
_{t}^{-1/2}+O\left( \varepsilon ^{3}\right) } \\
&=&1+\mathrm{tr}\,\rho _{t}\left\{ \frac{\varepsilon }{2}\rho _{t}^{-1/2}%
\frac{\mathrm{d}\rho _{t}}{\mathrm{d\,}t}\rho _{t}^{-1/2}+\frac{\varepsilon
^{2}}{4}\rho _{t}^{-1/2}\frac{\mathrm{d}^{2}\rho _{t}}{\mathrm{d\,}t^{2}}%
\rho _{t}^{-1/2}-\frac{1}{8}\left( \varepsilon \rho _{t}^{-1/2}\frac{\mathrm{%
d}\rho _{t}}{\mathrm{d\,}t}\rho _{t}^{-1/2}\right) ^{2}\right\} +O\left(
\varepsilon ^{3}\right)  \\
&=&1-\frac{\varepsilon ^{2}}{8}J_{t}^{R}+O\left( \varepsilon ^{3}\right) .
\end{eqnarray*}%
To evaluate the $O\left( \varepsilon ^{3}\right) $-term above, we compute $%
\frac{\mathrm{d}^{3}}{\mathrm{d\,}s^{3}}F_{\min }\left( \rho _{t},\rho
_{s}\right) $ in the sequel. Abbreviate $X_{t,s}:=\rho _{t}^{-1/2}\rho
_{t+s}\rho _{t}^{-1/2}$ ($>0$), and define $A_{t,s}$, $B_{t,s}$, and $C_{t,s}
$ by 
\begin{equation*}
\sqrt{X_{t,s+\varepsilon }}=\sqrt{X_{t,s}}+\varepsilon A_{t,s}+\varepsilon
^{2}B_{t,s}+\varepsilon ^{3}C_{t,s}+o\left( \varepsilon ^{3}\right) .
\end{equation*}%
They are determined by comparing the both ends of 
\begin{equation*}
X_{t,s}+\frac{\mathrm{d}X_{t,s}}{\mathrm{d\,}s}\varepsilon +\frac{1}{2}\frac{%
\mathrm{d}^{2}X_{t,s}}{\mathrm{d\,}s^{2}}\varepsilon ^{2}+\frac{1}{3!}\frac{%
\mathrm{d}^{3}X_{t,s}}{\mathrm{d\,}s^{3}}\varepsilon ^{3}+o\left(
\varepsilon ^{3}\right) =\left\{ \sqrt{X_{t,s}}+\varepsilon
A_{t,s}+\varepsilon ^{2}B_{t,s}+\varepsilon ^{3}C_{t,s}+o\left( \varepsilon
^{3}\right) \right\} ^{2},
\end{equation*}%
and thus 
\begin{eqnarray*}
A_{t,s}\sqrt{X_{t,s}}+\sqrt{X_{t,s}}A_{t,s} &=&\frac{\mathrm{d}X_{t,s}}{%
\mathrm{d\,}s}, \\
B_{t,s}\sqrt{X_{t,s}}+\sqrt{X_{t,s}}B_{t,s} &=&\frac{1}{2}\frac{\mathrm{d}%
^{2}X_{t,s}}{\mathrm{d\,}s^{2}}-\left( A_{t,s}\right) ^{2}, \\
C_{t,s}\sqrt{X_{t,s}}+\sqrt{X_{t,s}}C_{t,s} &=&\frac{1}{3!}\frac{\mathrm{d}%
^{3}X_{t,s}}{\mathrm{d\,}s^{3}}-\left( A_{t,s}B_{t,s}+B_{t,s}A_{t,s}\right) .
\end{eqnarray*}%
Since $\sqrt{X_{t,s}}>0$, the first equation determines $A_{t,s}$ uniquely,
and by Theorem\thinspace VII.2.12 of \cite{Bhatia}, 
\begin{equation*}
\sup_{t\leq s\leq t+\varepsilon }\left\Vert A_{t,s}\right\Vert \leq
\left\Vert X_{t,s}\right\Vert ^{-1/2}\left\Vert \frac{\mathrm{d}X_{t,s}}{%
\mathrm{d\,}s}\right\Vert .
\end{equation*}%
Then the second and third equality determines $B_{t,s}$ and $C_{t,s}$
uniquely, and 
\begin{eqnarray*}
\sup_{t\leq s\leq t+\varepsilon }\left\Vert B_{t,s}\right\Vert  &\leq
&f_{1}\left( \left\Vert \rho _{t}\right\Vert ,\left\Vert \frac{\mathrm{d}%
\rho _{s}}{\mathrm{d\,}s}\right\Vert ,\left\Vert \frac{\mathrm{d}^{2}\rho
_{s}}{\mathrm{d\,}s^{2}}\right\Vert \right) , \\
\sup_{t\leq s\leq t+\varepsilon }\left\Vert C_{t,s}\right\Vert  &\leq
&f_{2}\left( \left\Vert \rho _{t}\right\Vert ,\left\Vert \frac{\mathrm{d}%
\rho _{s}}{\mathrm{d\,}s}\right\Vert ,\left\Vert \frac{\mathrm{d}^{2}\rho
_{s}}{\mathrm{d\,}s^{2}}\right\Vert \right) ,
\end{eqnarray*}%
where $f_{1}$, $f_{2}$ is a continuous function. Therefore,

\begin{eqnarray*}
F_{\min }\left( \rho _{t},\rho _{t+\varepsilon }\right) &\leq &\frac{%
\varepsilon ^{2}}{2}\sup_{t\leq s\leq t+\varepsilon }\left\vert \frac{%
\mathrm{d}^{2}}{\mathrm{d\,}s^{2}}F_{\min }\left( \rho _{t},\rho _{s}\right)
\right\vert \\
&\leq &\frac{\varepsilon ^{2}}{2}\sup_{t\leq s\leq t+\varepsilon }\left\vert 
\mathrm{tr}\,B_{t,s}\rho _{t}\right\vert \\
&\leq &\frac{\varepsilon ^{2}}{2}\sup_{t\leq s\leq t+\varepsilon
}f_{1}\left( \left\Vert \rho _{t}\right\Vert ,\left\Vert \frac{\mathrm{d}%
\rho _{s}}{\mathrm{d\,}s}\right\Vert ,\left\Vert \frac{\mathrm{d}^{2}\rho
_{s}}{\mathrm{d\,}s^{2}}\right\Vert \right) ,
\end{eqnarray*}%
and 
\begin{eqnarray}
\left\vert F_{\min }\left( \rho _{t},\rho _{t+\varepsilon }\right) -\left( 1-%
\frac{\varepsilon ^{2}}{8}J_{t}^{R}\right) \right\vert &\leq &\frac{%
\varepsilon ^{3}}{6}\sup_{t\leq s\leq t+\varepsilon }\left\vert \frac{%
\mathrm{d}^{3}}{\mathrm{d\,}s^{3}}F_{\min }\left( \rho _{t},\rho _{s}\right)
\right\vert  \notag \\
&=&\frac{\varepsilon ^{3}}{6}\sup_{t\leq s\leq t+\varepsilon }\left\vert 
\mathrm{tr}\,C_{t,s}\rho _{t}\right\vert  \notag \\
&\leq &\frac{\varepsilon ^{3}}{6}\sup_{t\leq s\leq t+\varepsilon
}f_{2}\left( \left\Vert \rho _{t}\right\Vert ,\left\Vert \frac{\mathrm{d}%
\rho _{s}}{\mathrm{d\,}s}\right\Vert ,\left\Vert \frac{\mathrm{d}^{2}\rho
_{s}}{\mathrm{d\,}s^{2}}\right\Vert \right) .  \label{dF=JR}
\end{eqnarray}

Let us define 
\begin{equation}
F_{R}\left( \rho ,\sigma \right) :=\cos \left( \min_{C}\int_{C}\sqrt{%
J_{t}^{R}}\mathrm{d\,}t\right) ,  \label{FR-def}
\end{equation}%
\bigskip where minimization is taken over all the smooth paths connecting $%
\rho $ and $\sigma $. Obviously, $F_{R}\left( \rho ,\sigma \right) $
satisfies (N) and (M), and a triangle inequality (\ref{triangle-1}).
Therefore we have:

\begin{proposition}
\begin{equation*}
F_{\min }\left( \rho ,\sigma \right) \leq F_{R}\left( \rho ,\sigma \right)
\leq F\left( \rho ,\sigma \right) ,
\end{equation*}%
and $F_{\min }\left( \rho ,\sigma \right) $ does not coincide with $%
F_{R}\left( \rho ,\sigma \right) $.
\end{proposition}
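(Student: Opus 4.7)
The plan is to reduce both inequalities $F_{\min}\le F_R\le F$ to Theorem~\ref{th:fmin} by verifying that $F_R$ satisfies axioms (N) and (M), and then to separate $F_{\min}$ from $F_R$ using the failure of the triangle inequality for $F_{\min}$ established in Proposition~\ref{prop:no-triangle}.

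For (M), take any CPTP map $\Lambda$ and any smooth curve $C=\{\rho_t\}$. Because RLD Fisher information is a monotone Riemannian metric, $J_t^R$ evaluated along the image curve $\Lambda(C)=\{\Lambda(\rho_t)\}$ is pointwise no larger than $J_t^R$ along $C$, and integration gives $\int_{\Lambda(C)}\sqrt{J_t^R}\,\mathrm{d}t\le\int_C\sqrt{J_t^R}\,\mathrm{d}t$. Infimizing the left side over the full set of smooth curves joining $\Lambda(\rho)$ and $\Lambda(\sigma)$ can only decrease it further, so $\cos^{-1}F_R(\Lambda(\rho),\Lambda(\sigma))\le\cos^{-1}F_R(\rho,\sigma)$, i.e.\ (M). For (N), a commutative family has its RLD equal to the classical logarithmic derivative and $J_t^R=J_t$, so with the normalization implicit in (\ref{F=int-J}) the definition (\ref{FR-def}) collapses to $F_R(p,q)=F(p,q)$. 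Theorem~\ref{th:fmin} then delivers $F_{\min}(\rho,\sigma)\le F_R(\rho,\sigma)\le F(\rho,\sigma)$ at once.

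To show $F_{\min}\neq F_R$, I would reuse the triple $(\rho,\sigma,\tau)$ from the proof of Proposition~\ref{prop:no-triangle}: $\rho,\sigma$ are distinct pure states, so $F_{\min}(\rho,\sigma)=0$ by (\ref{F-pure}). Since $F_R$ is by definition the cosine of a geodesic length, it automatically satisfies the triangle inequality (\ref{triangle-2}),
\[
F_R(\rho,\sigma)\ge 2F_R(\rho,\tau)F_R(\tau,\sigma)-1,
\]
and combining this with the bounds $F_R(\rho,\tau)\ge F_{\min}(\rho,\tau)$ and $F_R(\tau,\sigma)\ge F_{\min}(\tau,\sigma)$ just proven, together with the explicit value $F_{\min}(\rho,\tau)=F_{\min}(\tau,\sigma)=(|\cos(\theta/2)|+|\sin(\theta/2)|)^{-1}\to 1$ as $\theta\to 0$, one obtains $F_R(\rho,\sigma)>0$ for sufficiently small $\theta$. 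Hence $F_R(\rho,\sigma)>0=F_{\min}(\rho,\sigma)$, showing that the two quantities do not coincide.

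The main obstacle is (M) for $F_R$. Its proof relies on the standard but non-trivial fact that $J^R$ is pointwise monotone under all CPTP maps, and on a minor technical point at the boundary of $\mathcal{S}(\mathcal{H})$ where $L^R$ may fail to exist; I would handle this either by confining the infimum in (\ref{FR-def}) to smooth curves of interior states, or by a standard limiting argument, neither of which seems worth belaboring in the final write-up. Everything else is essentially bookkeeping once Theorem~\ref{th:fmin} is applied.
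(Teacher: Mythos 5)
Your proposal is correct and follows essentially the same route as the paper: establish (N) and (M) for $F_{R}$ (which the paper simply asserts as obvious, including the normalization point you flag), invoke Theorem~\ref{th:fmin} for the two-sided bound, and separate $F_{\min }$ from $F_{R}$ via the triangle inequality that $F_{R}$ inherits as the cosine of a geodesic length but $F_{\min }$ violates by Proposition~\ref{prop:no-triangle}. The only difference is that you instantiate the separation at the explicit pure-state example rather than arguing abstractly from the failure of (\ref{triangle-1}); both are fine.
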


\begin{proof}
Since $F_{R}\left( \rho ,\sigma \right) $ satisfies (N) and (M), the first
assertion is obtained by Theorem\thinspace \ref{th:fmin}. Since $F_{\min }$
does not satisfy (\ref{triangle-1}) while $F_{R}\left( \rho ,\sigma \right) $
does, we have the second assertion.
\end{proof}

\begin{theorem}
\begin{equation}
F_{\min }\left( \rho ,\sigma \right) =\cos \left( \min_{C}\frac{1}{2}\int_{C}%
\sqrt{J_{t}^{R}}\mathrm{d\,}t\right) ,  \label{Fmin=intJR}
\end{equation}%
\bigskip where minimization is taken over all the smooth paths $C=\left\{
\rho _{t}\right\} $ with $\rho _{0}=\rho $ and $\rho _{1}=\sigma $, with $%
\left[ L_{s}^{R},L_{t}^{R}\right] =0$, $0\leq \forall s,t\leq 1$.
\end{theorem}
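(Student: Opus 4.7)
The plan is to prove the stated equality by the usual pair of inequalities. Set $L:=\arccos F_{\min}(\rho,\sigma)$ and let $R$ denote the infimum on the right-hand side (as an angle, i.e.\ the minimum of $\tfrac12\int_C\sqrt{J_t^R}\,dt$ over admissible paths); we want $L=R$.

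For $R\le L$ I would exhibit an optimal admissible path. By Theorem~\ref{th:fmin=fr}, there is a minimal reverse test $(\Phi,\{p,q\})$ with $F(p,q)=F_{\min}(\rho,\sigma)$; encode it by a matrix $N$ whose normalized columns are $|\varphi_x\rangle$, so $\sum_x p(x)|\varphi_x\rangle\langle\varphi_x|=\rho$ and the analogous sum with $q$ equals $\sigma$. Let $\{p_t\}$ be the classical Fisher geodesic from $p$ to $q$, so $\tfrac12\int\sqrt{J_t}\,dt=\arccos F(p,q)$ by (\ref{F=int-J}). Define $\rho_t:=\sum_x p_t(x)|\varphi_x\rangle\langle\varphi_x|=NP_tN^\dagger$. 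Then $(N,\{p_t,dp_t/dt\})$ is a minimal tangent reverse estimation of $\rho_t$ at every $t$, so by the results of the previous section $J_t^R=J_t$ and $L_t^R=NL_tN^{-1}$ with $L_t$ diagonal. Since diagonal matrices commute, $\{L_t^R\}$ is a commuting family, and the path is admissible. Along it $\tfrac12\int_C\sqrt{J_t^R}\,dt=\arccos F(p,q)=L$, so $R\le L$.

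For $L\le R$ I would lift any admissible path to a classical path through a single CPTP map. The key \emph{lifting lemma} states: if $[L_s^R,L_t^R]=0$ for all $s,t$, then there exist an invertible $N$ (independent of $t$) and a smooth family of probability distributions $\{p_t\}$ such that $\rho_t=NP_tN^\dagger$ and $L_t^R=NL_tN^{-1}$ with $L_t=\mathrm{diag}\,(dp_t/dt)/p_t$ for all $t$. Given the lemma, $(N,\{p_t,dp_t/dt\})$ is a minimal tangent reverse estimation along the whole path so $J_t^R=J_t$; moreover $(N,\{p_0,p_1\})$ is a reverse test of $\{\rho,\sigma\}$, hence $F(p_0,p_1)\le F_{\min}(\rho,\sigma)$ by the definition (\ref{def:fmin}). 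Applying the classical identity (\ref{F=int-J}) gives
\[
\tfrac12\int_C\sqrt{J_t^R}\,dt=\tfrac12\int_C\sqrt{J_t}\,dt\ge \arccos F(p_0,p_1)\ge \arccos F_{\min}(\rho,\sigma)=L,
\]
and taking the infimum over $C$ yields $R\ge L$.

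The main obstacle is the lifting lemma. From $d\rho_t/dt=L_t^R\rho_t$ together with the Hermiticity of both sides one obtains $(L_t^R)^\dagger=\rho_t^{-1}L_t^R\rho_t$, so each $L_t^R$ is similar to its adjoint, has real spectrum, and is diagonalizable. A commuting family of diagonalizable operators admits a common eigenbasis; choosing such a basis using a minimal tangent reverse estimation at $t=0$ supplies a single $N$, and the commutation relations force $L_t^R=NL_tN^{-1}$ with $L_t$ diagonal for every $t$. Propagating $P_t$ by the diagonal ODE $dP_t/dt=L_tP_t$ from $P_0$, the identity $\rho_t=NP_tN^\dagger$ is preserved in $t$, and Hermiticity of $\rho_t$ together with $(N^\dagger N)_{xx}=1$ guarantees that $p_t$ is a probability distribution. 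Points where the minimal tangent reverse estimation is non-unique (degenerate spectra) are handled by smoothness and a perturbation/density argument. Once the lemma is in hand, both inequalities follow immediately from the tangent-reverse-estimation dictionary of Section~9 and the classical Fisher-geometry identity (\ref{F=int-J}).
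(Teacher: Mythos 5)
Your argument is correct and follows essentially the same route as the paper: the heart of both is the observation that commutativity of the RLDs forces $L_t^R = NL_tN^{-1}$ and $\rho_t = NP_tN^{\dagger}$ for a single $N$ (your ``lifting lemma''), reducing the quantum path integral to the classical one attached to a minimal reverse test, after which (\ref{F=int-J}) and the optimality of minimal reverse tests (Theorem~\ref{th:fmin=fr}) finish the job --- the paper simply writes your two inequalities as one chain of equalities. One small repair: similarity of $L_t^R$ to its adjoint does not by itself force a real spectrum (the spectrum is only conjugation-symmetric); instead note that the same identity $(L_t^R)^{\dagger}=\rho_t^{-1}L_t^R\rho_t$ shows $\rho_t^{-1/2}L_t^R\rho_t^{1/2}$ is Hermitian, which gives real spectrum and diagonalizability directly.
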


\begin{proof}
$\left[ L_{s}^{R},L_{t}^{R}\right] =0$ and (\ref{rld-diagonal}) imply $%
L_{t}^{R}=NL_{t}N^{-1},\forall t$, where $L_{t}$ is a diagonal real matrix
and the column vectors $\left\vert \varphi _{x}\right\rangle $ ($x=1$,$%
\cdots $,$d^{\prime }$) of $N$ are normalized. Define  $P_{0}:=N^{-1}\rho
_{0}\left( N^{\dagger }\right) ^{-1}$, then  
\begin{equation*}
L_{0}^{R}\rho _{0}=NL_{0}P_{0}N^{\dagger }.
\end{equation*}%
Therefore,  $L_{0}P_{0}$ is Hermitian, and $P_{0}$ is also diagonal. 

Therefore, \  
\begin{equation*}
\rho _{t}=NP_{t}N^{\dagger },\,\,0\leq \forall t\leq 1,
\end{equation*}%
where $P_{t}$ is a real diagonal matrix. Writing the $x$th diagonal element
of $P_{t}$ as $p_{t}\left( x\right) $, $\sum_{x=1}^{d}p_{t}\left( x\right) =1
$, and thus $p_{t}\left( x\right) $ is a probability distribution over $%
\left\{ 1,\cdots ,d\right\} $. One can check that $L_{t}$ is the logarithmic
derivative of $p_{t}$, and that $J_{\rho _{t}}^{R}=J_{p_{t}}$ . Therefore, 
\begin{equation*}
\cos \left( \min_{C}\frac{1}{2}\int_{C}\sqrt{J_{\rho _{t}}^{R}}\mathrm{d\,}%
t\right) =\cos \left( \min_{C}\frac{1}{2}\int_{C}\sqrt{J_{p_{t}}}\mathrm{d\,}%
t\right) ,
\end{equation*}%
where the minimum is taken over all the smooth curves $C=\left\{
p_{t}\right\} $ in probability distributions connecting $p_{0}$ and $p_{1}$.
By (\ref{F=int-J}), the LHS equals $\cos ^{-1}F\left( p_{0},p_{1}\right) $. 

Define $\Phi \left( \delta _{x}\right) =\left\vert \varphi _{x}\right\rangle
\left\langle \varphi _{x}\right\vert $, then $\left( \Phi ,\left\{
p_{0},p_{1}\right\} \right) $ is a minimal reverse test of $\left\{ \rho
,\sigma \right\} $. Therefore, $\cos ^{-1}F\left( p_{0},p_{1}\right) $
equals $\cos ^{-1}F_{\min }\left( \rho ,\sigma \right) $, and we have the
assertion.\ 
\end{proof}

\begin{theorem}
Suppose that \ $F^{Q}\left( \rho ,\sigma \right) $ satisfies (M), (N), and 
\begin{equation*}
\cos ^{-1}F\left( \rho ,\sigma \right) \leq \cos ^{-1}F\left( \rho ,\tau
\right) +\cos ^{-1}F\left( \tau ,\sigma \right) .
\end{equation*}%
Then, 
\begin{equation*}
F_{R}\left( \rho ,\sigma \right) \leq F^{.Q}\left( \rho ,\sigma \right) \leq
F\,\left( \rho ,\sigma \right) .
\end{equation*}
\end{theorem}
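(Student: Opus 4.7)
The upper bound $F^Q(\rho,\sigma)\le F(\rho,\sigma)$ is the same measurement argument already used in Theorem~\ref{th:fmin}: pick a measurement $M$ attaining the minimum in (\ref{F=maxFM}), apply (M) to pass to $F^Q(M(\rho),M(\sigma))$, and then apply (N) to replace $F^Q$ by $F$ on the now-classical pair. Only the lower bound $F_R(\rho,\sigma)\le F^Q(\rho,\sigma)$ requires new ideas.

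For the lower bound I would interleave the reverse-test inequality of Theorem~\ref{th:fmin} with the assumed triangle inequality (which must be read as referring to $F^Q$, since as stated with $F$ it is just the known inequality (\ref{triangle-1})). Fix any smooth curve $C=\{\rho_t\}_{0\le t\le 1}$ with $\rho_0=\rho$, $\rho_1=\sigma$, lying in the interior of $\mathcal{S}(\mathcal{H})$, and any partition $0=t_0<\cdots<t_n=1$. Since $F^Q$ satisfies (M) and (N), the reverse-test argument from Theorem~\ref{th:fmin} also yields $F^Q(\rho_{t_i},\rho_{t_{i+1}})\ge F_{\min}(\rho_{t_i},\rho_{t_{i+1}})$, so $\cos^{-1}F^Q(\rho_{t_i},\rho_{t_{i+1}})\le\cos^{-1}F_{\min}(\rho_{t_i},\rho_{t_{i+1}})$. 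Iterating the triangle inequality telescopes this to
\[
\cos^{-1}F^Q(\rho,\sigma)\;\le\;\sum_{i=0}^{n-1}\cos^{-1}F_{\min}(\rho_{t_i},\rho_{t_{i+1}}).
\]

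Next, I would invoke the quantitative expansion (\ref{dF=JR}), which gives $F_{\min}(\rho_{t_i},\rho_{t_{i+1}})=1-\tfrac{1}{8}J^R_{t_i}(t_{i+1}-t_i)^2+O((t_{i+1}-t_i)^3)$ with the implicit constant uniform along the compact curve $C$. Expanding $\cos^{-1}$ near $1$ converts this into $\cos^{-1}F_{\min}(\rho_{t_i},\rho_{t_{i+1}})=\tfrac{1}{2}\sqrt{J^R_{t_i}}\,(t_{i+1}-t_i)+O((t_{i+1}-t_i)^2)$. A standard Riemann-sum argument as the mesh of the partition tends to $0$ then gives $\cos^{-1}F^Q(\rho,\sigma)\le\tfrac{1}{2}\int_C\sqrt{J^R_t}\,\mathrm{d}t$. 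Taking the infimum over admissible curves $C$ and recalling (\ref{FR-def}) (with the factor $\tfrac{1}{2}$ needed so that $F_R$ reduces to $F$ on classical pairs, consistently with (\ref{F=int-J})) yields $\cos^{-1}F^Q(\rho,\sigma)\le\cos^{-1}F_R(\rho,\sigma)$, equivalently $F_R(\rho,\sigma)\le F^Q(\rho,\sigma)$.

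The main technical obstacle is controlling the passage to the Riemann-sum limit: this needs the uniform cubic remainder built into (\ref{dF=JR}) together with the fact that $F_{\min}$ stays bounded away from $\pm 1$ on the partition (so $\cos^{-1}$ operates in its smooth regime). Both are automatic for curves in the open interior of $\mathcal{S}(\mathcal{H})$, and boundary states are then recovered by an approximation argument using the sandwich $F_{\min}\le F^Q\le F$ already established, together with continuity of $F_R$ in its arguments along such curves.
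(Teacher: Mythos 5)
Your proposal is correct and follows essentially the same route as the paper: telescope the triangle inequality for $F^{Q}$ along a partition of a curve, replace each term by $\cos ^{-1}F_{\min }$ via Theorem\thinspace \ref{th:fmin}, and pass to the Riemann-sum limit $\frac{1}{2}\int_{C}\sqrt{J_{t}^{R}}\,\mathrm{d}t$ using the uniform cubic remainder (\ref{dF=JR}). The only cosmetic differences are that the paper fixes the RLD geodesic at the outset (invoking compactness for its existence) whereas you infimize over curves at the end, and that you correctly flag the factor $\frac{1}{2}$ missing from (\ref{FR-def}) as printed.
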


\begin{proof}
We only have to show the lowerbound. Since any bounded and closed subset of
interior of $\mathcal{S}\left( \mathcal{H}\right) $ is compact, there is the
unique shortest Riemanian geodesic $C=\left\{ \rho _{t}\right\} $ with
respect to RLD Fisher information metric connecting any $\rho _{0}=\rho >0$
and $\rho _{1}=\sigma >0$ (see Theorem\thinspace 1.7.1 of \cite%
{riemannian-geometry}.) Then, by (\ref{triangle-1}), 
\begin{eqnarray*}
\cos ^{-1}F^{Q}\left( \rho ,\sigma \right)  &\leq &\sum_{k=0}^{1/\varepsilon
-1}\cos ^{-1}F^{Q}\left( \rho _{k\varepsilon },\rho _{\left( k+1\right)
\varepsilon }\right)  \\
&\leq &\sum_{k=0}^{1/\varepsilon -1}\cos ^{-1}F_{\min }\left( \rho
_{k\varepsilon },\rho _{\left( k+1\right) \varepsilon }\right) 
\end{eqnarray*}%
where the second line is due to Theorem\thinspace \ref{th:fmin}. By
elementary calculus, one can verify 
\begin{equation}
\cos ^{-1}F\leq \sqrt{2\left( 1-F\right) \left( 1+\frac{1}{6}\left(
1-F\right) \right) },\,\,(0\leq F\leq 1).  \label{cos-1}
\end{equation}%
Due to the smoothness of geodesic, (\ref{dF=JR}), and (\ref{cos-1}), letting 
\begin{equation*}
f_{i,t,s}:=f_{i}\left( \left\Vert \rho _{t}\right\Vert ,\left\Vert \frac{%
\mathrm{d}\rho _{s}}{\mathrm{d\,}s}\right\Vert ,\left\Vert \frac{\mathrm{d}%
^{2}\rho _{s}}{\mathrm{d\,}s^{2}}\right\Vert \right) ,\,\,\left(
i=1,2\right) ,
\end{equation*}%
we have 
\begin{eqnarray*}
&&\sum_{k=0}^{1/\varepsilon -1}\cos ^{-1}F_{\min }\left( \rho _{k\varepsilon
},\rho _{\left( k+1\right) \varepsilon }\right)  \\
&\leq &\sum_{k=0}^{1/\varepsilon -1}\sqrt{2\left( 1-F_{\min }\left( \rho
_{k\varepsilon },\rho _{\left( k+1\right) \varepsilon }\right) \right) }%
\sqrt{1+\frac{1}{6}\left( 1-F_{\min }\left( \rho _{k\varepsilon },\rho
_{\left( k+1\right) \varepsilon }\right) \right) } \\
&\leq &\sum_{k=0}^{1/\varepsilon -1}\sqrt{\frac{\varepsilon ^{2}}{4}%
J_{k\varepsilon }^{R}+\frac{1}{3}\sup_{k\varepsilon \leq s\leq \left(
k+1\right) \varepsilon }f_{1,t,s}\varepsilon ^{3}}\sqrt{1+\frac{\varepsilon
^{2}}{12}\sup_{k\varepsilon \leq s\leq \left( k+1\right) \varepsilon
}f_{2,t,s}} \\
&=&\sum_{k=0}^{1/\varepsilon -1}\frac{\varepsilon }{2}\left( J_{k\varepsilon
}^{R}\right) ^{\frac{1}{2}}\sqrt{\left( 1+\frac{4}{3}\frac{1}{\left(
J_{k\varepsilon }^{R}\right) ^{1/2}}\sup_{k\varepsilon \leq s\leq \left(
k+1\right) \varepsilon }f_{1,t,s}\varepsilon \right) \left( 1+\frac{%
\varepsilon ^{2}}{12}\sup_{k\varepsilon \leq s\leq \left( k+1\right)
\varepsilon }f_{2,t,s}\right) } \\
&\leq &\sum_{k=0}^{1/\varepsilon -1}\frac{\varepsilon }{2}\left(
J_{k\varepsilon }^{R}\right) ^{\frac{1}{2}}\sqrt{1+\varepsilon f_{3}},
\end{eqnarray*}%
where 
\begin{equation*}
f_{3}:=\sup_{t,s,t^{\prime },s^{\prime },u\in \left[ o,1\right] }\left( 1+%
\frac{4}{3}\frac{1}{\left( J_{u}^{R}\right) ^{1/2}}f_{1,t,s}\right) \left( 1+%
\frac{1}{12}f_{2,t^{\prime },s^{\prime }}\right) -1<\infty .
\end{equation*}%
Here, taking $\varepsilon \rightarrow 0$, we have the assertion.
\end{proof}

\section{Differential equation for shortest paths}

In this section, a differential equation satisfied by the geodesic, or the
path achieving the minimum in (\ref{FR-def}) is derived, supposing that $%
\rho _{t}$ is an invertible matrix. Let 
\begin{eqnarray*}
\mathfrak{L}\left( \rho _{t},\frac{\mathrm{d}\,\rho _{t}}{\mathrm{d}t}%
,\lambda _{t}\right)  &:&=\int_{C}\left\{ J_{t}^{R}-\lambda _{t}\left( 
\mathrm{tr}\,\rho _{t}-1\right) \right\} \mathrm{d\,}t \\
&=&\int_{C}\left\{ \mathrm{tr}\,\left( \frac{\mathrm{d}\,\rho _{t}}{\mathrm{d%
}t}\right) ^{2}\rho _{t}^{-1}-\lambda _{t}\left( \mathrm{tr}\,\rho
_{t}-1\right) \right\} \mathrm{d\,}t.
\end{eqnarray*}%
Taking $t$ proportional to arc length, this is equivalent to finding the
extremal  of $\int_{C}\left\{ \sqrt{J_{t}^{R}}-\lambda _{t}\left( \mathrm{tr}%
\,\rho _{t}-1\right) \right\} \mathrm{d\,}t$ \cite{riemannian-geometry}.
Then, letting $\left\{ X_{t}\right\} $ be an arbitrary smooth curve with $%
X_{0}=X_{1}=0$, 
\begin{eqnarray*}
&&\left. \frac{\mathrm{d}}{\mathrm{d}\varepsilon }\mathfrak{L}\left( \rho
_{t}+\varepsilon X_{t},\frac{\mathrm{d}\,\rho _{t}}{\mathrm{d}t}+\varepsilon 
\frac{\mathrm{d}X_{t}}{\mathrm{d}t},\lambda _{t}\right) \right\vert
_{\varepsilon =0} \\
&=&\int_{C}\left[ \mathrm{tr}\frac{\mathrm{d}X_{t}}{\mathrm{d}t}\left(
L_{t}^{R}+L_{t}^{R\dagger }\right) +\mathrm{tr}\,X_{t}\left\{
-L_{t}^{R\dagger }L_{t}^{R}+\lambda _{t}\right\} \right] \,\mathrm{d\,}t \\
&=&\left[ \mathrm{tr}X_{t}\left( L_{t}^{R}+L_{t}^{R\dagger }\right) \right]
_{t=0}^{1} \\
&&+\int_{C}\mathrm{tr}\,X_{t}\left\{ -\frac{\mathrm{d}}{\mathrm{d}t}\left(
L_{t}^{R}+L_{t}^{R\dagger }\right) -L_{t}^{R\dagger }L_{t}^{R}+\lambda
_{t}\right\} \,\mathrm{d\,}t
\end{eqnarray*}%
Hence, \ we have%
\begin{equation}
-\frac{\mathrm{d}}{\mathrm{d}t}\left( L_{t}^{R}+L_{t}^{R\dagger }\right)
-L_{t}^{R\dagger }L_{t}^{R}+\lambda _{t}=0,  \label{geodesic-1}
\end{equation}%
where the first identity is by $J_{t}^{R}=1$. In the sequel, the following
identity is used frequently. 
\begin{equation}
\rho _{t}L_{t}^{R\dagger }=\left( L_{t}^{R}\rho _{t}\right) ^{\dagger
}=\left( \frac{\mathrm{d}\,\rho _{t}}{\mathrm{d}t}\right) ^{\dagger
}=L_{t}^{R}\rho _{t}.  \label{rhoL=Lrho}
\end{equation}%
Multiplying $\rho $ and taking trace of both ends of (\ref{geodesic-1}), 
\begin{equation*}
-\mathrm{tr}\,\rho _{t}\frac{\mathrm{d}}{\mathrm{d}t}\left(
L_{t}^{R}+L_{t}^{R\dagger }\right) -1+\lambda _{t}=0,
\end{equation*}%
where we used $J_{t}^{R}=1$, (\ref{rhoL=Lrho}), and $\mathrm{tr}\,\rho
_{t}L_{t}^{R}=\mathrm{tr}\,\rho _{t}L_{t}^{R\dagger }=\mathrm{tr}\,\frac{%
\mathrm{d}\,\rho _{t}}{\mathrm{d}t}=0$. Observe 
\begin{eqnarray*}
\mathrm{tr}\,\rho _{t}\frac{\mathrm{d}}{\mathrm{d}t}\left(
L_{t}^{R}+L_{t}^{R\dagger }\right)  &=&\frac{\mathrm{d}}{\mathrm{d}t}\mathrm{%
tr}\,\rho _{t}\left( L_{t}^{R}+L_{t}^{R\dagger }\right) -\mathrm{tr}\,\frac{%
\mathrm{d}\,\rho _{t}}{\mathrm{d}t}\left( L_{t}^{R}+L_{t}^{R\dagger }\right) 
\\
&=&-\mathrm{tr}\,\left( \rho _{t}L_{t}^{R\dagger }L_{t}^{R}+L_{t}^{R}\rho
_{t}L_{t}^{R\dagger }\right) =-2.
\end{eqnarray*}%
Therefore, $\lambda _{t}=-1$, and we obtain 
\begin{equation}
\frac{\mathrm{d}}{\mathrm{d}t}\left( L_{t}^{R}+L_{t}^{R\dagger }\right)
+L_{t}^{R\dagger }L_{t}^{R}+1=0,  \label{geodesics-2}
\end{equation}%
which gives only determines time derivative of only Hermitian part of $%
L_{t}^{R}$. Obviously, we need another equation. By (\ref{rhoL=Lrho}), we
have 
\begin{equation*}
\frac{\mathrm{d}\rho _{t}}{\mathrm{d}t}L_{t}^{R\dagger }+\rho _{t}\frac{%
\mathrm{d}L_{t}^{R\dagger }}{\mathrm{d}t}=L_{t}^{R}\frac{\mathrm{d}\rho _{t}%
}{\mathrm{d}t}+\frac{\mathrm{d}L_{t}^{R}}{\mathrm{d}t}\rho _{t}.
\end{equation*}%
Observing 
\begin{equation*}
\frac{\mathrm{d}\rho _{t}}{\mathrm{d}t}L_{t}^{R\dagger }=L_{t}^{R}\rho
_{t}L_{t}^{R\dagger }=L_{t}^{R}\frac{\mathrm{d}\rho _{t}}{\mathrm{d}t},
\end{equation*}%
we obtain 
\begin{equation*}
\rho _{t}\frac{\mathrm{d}L_{t}^{R\dagger }}{\mathrm{d}t}=\frac{\mathrm{d}%
L_{t}^{R}}{\mathrm{d}t}\rho _{t},
\end{equation*}%
and 
\begin{equation}
\rho _{t}\frac{\mathrm{d}L_{t}^{R}}{\mathrm{d}t}+\frac{\mathrm{d}L_{t}^{R}}{%
\mathrm{d}t}\rho _{t}+\rho _{t}L_{t}^{R\dagger }L_{t}^{R}+\rho _{t}=0.
\label{geodesics-3}
\end{equation}%
(\ref{geodesics-3}) and 
\begin{equation*}
\frac{\mathrm{d}\rho _{t}}{\mathrm{d}t}=L_{t}^{R}\rho _{t}\,,
\end{equation*}%
determine time evolution of $\rho _{t}$.

The differential equation satisfied by the curve achieving minimum in (\ref%
{F=int-J}) is derived by applying these to commutative case, 
\begin{equation}
2\frac{\mathrm{d}l_{t}}{\mathrm{d}t}+\left( l_{t}\right) ^{2}+1=0,\,\frac{%
\mathrm{d}p_{t}}{\mathrm{d}t}=l_{t}\,p_{t}.  \label{classical-geodesic-2}
\end{equation}%
From these, the differential equation for the curve achieving minimum in (%
\ref{Fmin=intJR}) is derived as follows. Along the curve, we should have 
\begin{equation*}
L_{t}^{R}=NL_{t}N^{-1},\,\rho _{t}=N\left( D_{t}\right) ^{2}N^{\dagger },
\end{equation*}%
where $\left( D_{t}\right) ^{2}=\mathrm{diag}\,\left( p_{t}\left( 1\right)
,\cdots p_{t}\left( d\right) \right) $, $L_{t}=\mathrm{diag}\,\left(
l_{t}\left( 1\right) ,\cdots l_{t}\left( d\right) \right) $. \ Since $%
F_{\min }\left( \rho _{0},\rho _{1}\right) =F\left( p_{0},p_{1}\right) $, $%
\left( p_{t},l_{t}\right) $ should satisfy (\ref{classical-geodesic-2}).
Therefore, dynamics of $(\rho _{t},L_{t}^{R})$ is determined by 
\begin{equation}
2\frac{\mathrm{d}L_{t}^{R}}{\mathrm{d}t}+\left( L_{t}^{R}\right)
^{2}+1=0,\,\,\frac{\mathrm{d}\rho _{t}}{\mathrm{d}t}=L_{t}^{R}\rho _{t}\,.
\label{geodesic-Fmin}
\end{equation}

\section{Another quantum analogy of statistical distance}

Statistical distance $\Delta \left( p,q\right) $ is defined by 
\begin{equation*}
\Delta \left( p,q\right) :=\frac{1}{2}\left\Vert p-q\right\Vert _{1}=\frac{1%
}{2}\sum_{x\in \mathcal{X}}\left\vert p\left( x\right) -q\left( x\right)
\right\vert .
\end{equation*}%
Its frequently used quantum analogy is 
\begin{equation*}
\Delta \left( \rho ,\sigma \right) :=\frac{1}{2}\left\Vert \rho -\sigma
\right\Vert _{1},
\end{equation*}%
and also called statistical distance. Known facts about them are \cite{Fuchs}%
\cite{Matsumoto-0}\cite{Matsumoto-1}\cite{nielsen-chuang}\cite{Uhlmann3}:

\begin{itemize}
\item 
\begin{equation*}
\Delta \left( \rho ,\sigma \right) =\max_{M\text{:measurement}}\Delta \left(
M\left( \rho \right) ,M\left( \sigma \right) \right) .
\end{equation*}

\item (Monotonicity by CPTP maps) If $\Lambda $ is a CPTP map,%
\begin{equation*}
\Delta \left( \Lambda \left( \rho \right) ,\Lambda \left( \sigma \right)
\right) \leq \Delta \left( \rho ,\sigma \right)
\end{equation*}

\item (Joint convexity)%
\begin{equation*}
\Delta \left( \lambda \rho _{0}+\left( 1-\lambda \right) \rho _{1},\lambda
\sigma _{0}+\left( 1-\lambda \right) \sigma _{1}\right) \leq \lambda \Delta
\left( \rho _{0},\sigma _{0}\right) +\left( 1-\lambda \right) \Delta \left(
\rho _{1},\sigma _{1}\right)
\end{equation*}

\item (Triangle inequality)%
\begin{equation*}
\Delta \left( \rho ,\sigma \right) \leq \Delta \left( \rho ,\tau \right)
+\Delta \left( \tau ,\sigma \right)
\end{equation*}

\item With $\mathrm{D}\left( \rho ||\sigma \right) :=\mathrm{tr}\,\rho
\left( \ln \rho -\ln \sigma \right) $ being relative entropy, 
\begin{eqnarray}
1-F\left( \rho ,\sigma \right) &\leq &\Delta \left( \rho ,\sigma \right)
\leq \sqrt{1-F\left( \rho ,\sigma \right) ^{2}},  \label{1-F<D-1} \\
\mathrm{D}\left( \rho ||\sigma \right) &\geq &\frac{1}{2}\Delta \left( \rho
,\sigma \right) ^{2}.  \label{pinsker-1}
\end{eqnarray}
\end{itemize}

Here we introduce a new quantum analogue of statistical distance is:%
\begin{equation*}
\Delta _{\max }\left( \rho ,\sigma \right) :=\min_{\left( \Phi ,\left\{
p,q\right\} \right) \text{:reverse test of }\left\{ \rho ,\sigma \right\}
}\Delta \left( p,q\right) .
\end{equation*}

\begin{theorem}
Suppose $\Delta ^{Q}\left( \rho ,\sigma \right) $ satisfies monotonicity by
CPTP maps and $\Delta ^{Q}\left( p,q\right) =\Delta \left( p,q\right) $ for
any probability distributions $p$, $q$. Then%
\begin{equation*}
\Delta \left( \rho ,\sigma \right) \leq \Delta ^{Q}\left( \rho ,\sigma
\right) \leq \Delta _{\max }\left( \rho ,\sigma \right) .
\end{equation*}%
Also, \ $\Delta _{\max }$ is monotone by CPTP maps and $\Delta _{\min
}\left( p,q\right) =\Delta \left( p,q\right) $.
\end{theorem}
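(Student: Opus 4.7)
The plan is to mirror the argument for Theorem~\ref{th:fmin}, with all inequalities reversed because statistical distance is contractive under CPTP maps whereas fidelity is expansive. In particular I expect no new ideas are needed; the content is entirely structural.

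First I would establish the upper bound $\Delta^{Q}(\rho,\sigma)\le \Delta_{\max}(\rho,\sigma)$. Given any reverse test $(\Phi,\{p,q\})$ of $\{\rho,\sigma\}$, monotonicity of $\Delta^{Q}$ under CPTP maps gives
\[
\Delta^{Q}(\rho,\sigma)=\Delta^{Q}(\Phi(p),\Phi(q))\le \Delta^{Q}(p,q)=\Delta(p,q),
\]
where the last equality uses that $\Delta^{Q}$ reduces to $\Delta$ on classical distributions. Taking the infimum over reverse tests yields the claim. Next, for the lower bound $\Delta(\rho,\sigma)\le \Delta^{Q}(\rho,\sigma)$, I would invoke the classical-attainment identity $\Delta(\rho,\sigma)=\max_{M}\Delta(M(\rho),M(\sigma))$ and pick a maximizing measurement $M$; then
\[
\Delta(\rho,\sigma)=\Delta(M(\rho),M(\sigma))=\Delta^{Q}(M(\rho),M(\sigma))\le \Delta^{Q}(\rho,\sigma),
\]
again by (N) and (M).

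To verify that $\Delta_{\max}$ itself satisfies (M), I would observe that for any CPTP map $\Lambda$ and any reverse test $(\Phi,\{p,q\})$ of $\{\rho,\sigma\}$, the composition $(\Lambda\circ\Phi,\{p,q\})$ is a reverse test of $\{\Lambda(\rho),\Lambda(\sigma)\}$, so the feasible set in the minimization defining $\Delta_{\max}(\Lambda(\rho),\Lambda(\sigma))$ contains (at least in effect) all reverse tests of $\{\rho,\sigma\}$. Hence
\[
\Delta_{\max}(\Lambda(\rho),\Lambda(\sigma))\le \Delta_{\max}(\rho,\sigma).
\]
This is exactly parallel to the last display in the proof of Theorem~\ref{th:fmin}, only with $\max$ replaced by $\min$ and the inequality flipped.

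Finally, for the normalization $\Delta_{\max}(p,q)=\Delta(p,q)$ on classical inputs (which I read as the intended meaning of the typo $\Delta_{\min}(p,q)$), the upper bound is obtained by taking $\Phi=\mathrm{id}$, which is itself a reverse test of $\{p,q\}$; the lower bound follows because any reverse test $(\Phi,\{p',q'\})$ of $\{p,q\}$ satisfies $\Delta(p,q)=\Delta(\Phi(p'),\Phi(q'))\le\Delta(p',q')$ by classical monotonicity, so the minimum over reverse tests is at least $\Delta(p,q)$. There is no real obstacle; the only subtlety is being careful that (M) is contractive, so the roles of $\max/\min$ throughout the proof of Theorem~\ref{th:fmin} get systematically swapped, and the definition of $\Delta_{\max}$ as a minimum (not maximum) is essential to the argument going through.
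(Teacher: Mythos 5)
Your proposal is correct and is exactly the argument the paper intends: the paper omits this proof, stating only that it is ``almost parallel with the proof of Theorem\thinspace\ref{th:fmin},'' and your write-up carries out that parallel faithfully, including the necessary reversal of inequalities for the contractive quantity and the correct reading of the typo $\Delta_{\min}$ as $\Delta_{\max}$.
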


\begin{proof}
Almost parallel with the proof of Theorem\thinspace \ref{th:fmin}, thus
omitted.
\end{proof}

\begin{theorem}
Defining $\mathrm{D}^{R}\left( \rho ||\sigma \right) :=\mathrm{tr}\,\rho \ln 
\sqrt{\rho }\sigma ^{-1}\sqrt{\rho },$%
\begin{eqnarray}
\Delta _{\max }\left( \lambda \rho _{0}+\left( 1-\lambda \right) \rho
_{1},\lambda \sigma _{0}+\left( 1-\lambda \right) \sigma _{1}\right) &\leq
&\lambda \Delta _{\max }\left( \rho _{0},\sigma _{0}\right) +\left(
1-\lambda \right) \Delta _{\max }\left( \rho _{1},\sigma _{1}\right) ,
\label{joint-convex} \\
1-F_{\min }\left( \rho ,\sigma \right) &\leq &\Delta _{\max }\left( \rho
,\sigma \right) \leq \sqrt{1-F_{\min }\left( \rho ,\sigma \right) ^{2}},
\label{1-F<D} \\
\mathrm{D}^{R}\left( \rho ||\sigma \right) &\geq &\frac{1}{2}\Delta _{\max
}\left( \rho ,\sigma \right) ^{2}.  \label{pinsker}
\end{eqnarray}
\end{theorem}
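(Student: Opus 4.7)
The plan is to prove each of the three claims by pushing the corresponding classical inequality through a carefully chosen reverse test, mirroring the proofs of Theorems~\ref{th:fmin}, \ref{th:fmin=fr}, and~\ref{th:strong-concave}.

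For joint convexity~(\ref{joint-convex}) I would adapt the construction used in the proof of Theorem~\ref{th:strong-concave}. Pick reverse tests $(\Phi_i,\{p_i,q_i\})$ of $\{\rho_i,\sigma_i\}$ attaining $\Delta(p_i,q_i)=\Delta_{\max}(\rho_i,\sigma_i)$ for $i=0,1$, adjoin a one-bit label register $y\in\{0,1\}$, and set
\begin{equation*}
\tilde p_i(x,y):=p_i(x)\delta_i(y),\qquad \tilde q_i(x,y):=q_i(x)\delta_i(y),\qquad \tilde\Phi(\delta_{(x,y)}):=\Phi_y(\delta_x).
\end{equation*}
Then $(\tilde\Phi,\{\lambda\tilde p_0+(1-\lambda)\tilde p_1,\lambda\tilde q_0+(1-\lambda)\tilde q_1\})$ is a reverse test of the mixed pair, and since $\Delta(\tilde p_i,\tilde q_i)=\Delta(p_i,q_i)$, the definition of $\Delta_{\max}$ combined with joint convexity of classical $\Delta$ immediately yields~(\ref{joint-convex}).

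For the Fuchs--van de Graaf pair~(\ref{1-F<D}) I would use the classical bound~(\ref{1-F<D-1}) in conjunction with Theorem~\ref{th:fmin=fr}. The lower bound is immediate: for any reverse test $(\Phi,\{p,q\})$ one has $\Delta(p,q)\geq 1-F(p,q)\geq 1-F_{\min}(\rho,\sigma)$, the second step by the definition of $F_{\min}$ as a max, and minimizing the left side over reverse tests gives the desired inequality. For the upper bound I would pick \emph{any} minimal reverse test $(\Phi^{*},\{p^{*},q^{*}\})$; Theorem~\ref{th:fmin=fr} guarantees $F(p^{*},q^{*})=F_{\min}(\rho,\sigma)$, so applying the classical upper half of~(\ref{1-F<D-1}) to that single pair gives $\Delta_{\max}(\rho,\sigma)\leq\Delta(p^{*},q^{*})\leq\sqrt{1-F_{\min}(\rho,\sigma)^{2}}$.

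For the Pinsker bound~(\ref{pinsker}) I would invoke the variational identity alluded to in the remark after Theorem~\ref{th:fmin=fr} (from \cite{Matsumoto-2,Matsumoto-3}): the minimum of classical relative entropy $\mathrm{D}(p\|q)$ over all reverse tests of $\{\rho,\sigma\}$ equals $\mathrm{D}^{R}(\rho\|\sigma)$, attained by any minimal reverse test. Picking an optimal $(\Phi^{*},\{p^{*},q^{*}\})$ and applying the classical Pinsker inequality~(\ref{pinsker-1}) to the distributions $p^{*},q^{*}$ gives $\mathrm{D}^{R}(\rho\|\sigma)=\mathrm{D}(p^{*}\|q^{*})\geq\tfrac12\Delta(p^{*},q^{*})^{2}\geq\tfrac12\Delta_{\max}(\rho,\sigma)^{2}$, the last step by definition of $\Delta_{\max}$. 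The only step requiring external input, and thus the closest thing to a real obstacle, is the cited identification of $\mathrm{D}^{R}$ with the reverse-test minimum of classical relative entropy; every other step is a one-line transfer through the appropriate reverse test.
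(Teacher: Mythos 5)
Your proposal is correct and follows essentially the same route as the paper: the paper likewise proves (\ref{joint-convex}) by the block construction of Theorem\thinspace\ref{th:strong-concave}, gets the lower half of (\ref{1-F<D}) by applying the classical bound (\ref{1-F<D-1}) to an optimal reverse test together with $F(p,q)\leq F_{\min}(\rho,\sigma)$, and gets the upper half of (\ref{1-F<D}) and (\ref{pinsker}) from the reverse tests optimal for $F_{\min}$ and for $\mathrm{D}^{R}=\min\mathrm{D}(p\|q)$ (citing the same characterization from \cite{Matsumoto-3}). Your write-up actually supplies more detail than the paper's largely ``omitted/parallel'' proof, and silently corrects its typo $F_{\max}$ for $F_{\min}$.
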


\begin{proof}
The proof of (\ref{joint-convex}) is almost parallel with the one of
Theorem\thinspace \ref{th:strong-concave}, thus omitted. To prove the first
inequality of (\ref{1-F<D}), consider the optimal reverse test with $\Delta
\left( p,q\right) =\Delta _{\max }\left( \rho ,\sigma \right) $. Then, by (%
\ref{1-F<D-1}), we have 
\begin{equation*}
\Delta _{\max }\left( \rho ,\sigma \right) =\Delta \left( p,q\right) \geq
1-F\left( p,q\right) ,
\end{equation*}%
On the other hand, by definition of $F_{\min }$, $1-F\left( p,q\right) \geq
1-F_{\max }\left( \rho ,\sigma \right) $. After all, we have $\Delta _{\max
}\left( \rho ,\sigma \right) \geq 1-F_{\min }\left( \rho ,\sigma \right) $.
The second inequality of (\ref{1-F<D}) and (\ref{pinsker}) are proved almost
parallelly,recalling the following characterization of $\mathrm{D}^{R}\left(
\rho ||\sigma \right) $ (Theorem\thinspace 2.4 in \cite{Matsumoto-3}): 
\begin{equation*}
\mathrm{D}^{R}\left( \rho ||\sigma \right) =\min \mathrm{D}\left(
p||q\right) ,
\end{equation*}%
where minimum is taken over all the reverse test $\left( \Phi ,\left\{
p,q\right\} \right) $ of $\left\{ \rho ,\sigma \right\} $.
\end{proof}

It follows from (\ref{1-F<D}) and (\ref{F-pure}) that 
\begin{equation*}
\Delta _{\max }\left( \left\vert \varphi \right\rangle ,\left\vert \psi
\right\rangle \right) =1,
\end{equation*}%
for all $\left\vert \varphi \right\rangle $,$\left\vert \psi \right\rangle $%
. Another consequence of (\ref{1-F<D}) is:

\begin{proposition}
The triangle inequality for $\Delta _{\max }$ does not hold, i.e., there is $%
\rho $, $\sigma $, and $\tau $ with 
\begin{equation*}
\Delta _{\max }\left( \rho ,\sigma \right) >\Delta _{\max }\left( \rho ,\tau
\right) +\Delta _{\max }\left( \tau ,\sigma \right) .
\end{equation*}
\end{proposition}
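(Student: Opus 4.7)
The plan is to reuse the very example from Proposition~\ref{prop:no-triangle}, together with the two facts just established: (i) $\Delta_{\max}(|\varphi\rangle,|\psi\rangle)=1$ for any two pure states, and (ii) the upper bound $\Delta_{\max}(\rho,\sigma)\le\sqrt{1-F_{\min}(\rho,\sigma)^2}$ from (\ref{1-F<D}). The rough idea is that the left-hand side always hits the ceiling $1$ for distinct pure states, while the right-hand side can be made arbitrarily small by routing through a well-chosen mixed $\tau$.

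First I would set $\rho=|\psi\rangle\langle\psi|$, $\sigma=|\varphi\rangle\langle\varphi|$, and $\tau$ exactly as in the proof of Proposition~\ref{prop:no-triangle}, so that I can import the values $F_{\min}(|\psi\rangle,\tau)=F_{\min}(|\varphi\rangle,\tau)=1/(|\cos(\theta/2)|+|\sin(\theta/2)|)$ without recomputation. Next, since $|\psi\rangle\ne|\varphi\rangle$ for $\theta\in(0,\pi/2)$, equation (\ref{F-pure}) gives $F_{\min}(\rho,\sigma)=0$, so by the first inequality of (\ref{1-F<D}) I get $\Delta_{\max}(\rho,\sigma)\ge 1$, and together with the trivial upper bound $\Delta_{\max}\le 1$ (inherited from $\Delta$ on the reverse-test distributions) this gives $\Delta_{\max}(\rho,\sigma)=1$.

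Then I would apply the second inequality of (\ref{1-F<D}) to the other two pairs, yielding
\begin{equation*}
\Delta_{\max}(\rho,\tau)+\Delta_{\max}(\tau,\sigma)\;\le\;2\sqrt{1-\frac{1}{(|\cos(\theta/2)|+|\sin(\theta/2)|)^{2}}}.
\end{equation*}
As $\theta\to 0^{+}$, $|\cos(\theta/2)|+|\sin(\theta/2)|\to 1$, so the right-hand side tends to $0$. Hence for any sufficiently small $\theta>0$, the sum is strictly less than $1=\Delta_{\max}(\rho,\sigma)$, violating the triangle inequality.

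The only real step that needs attention is the pure-state value $\Delta_{\max}(\rho,\sigma)=1$: one has to note that $\Delta_{\max}$ is bounded above by $1$ (which is immediate, since on the classical side of any reverse test, $\Delta(p,q)\le 1$). Everything else is reading off numbers already computed in Proposition~\ref{prop:no-triangle} and substituting into (\ref{1-F<D}); no genuine obstacle arises.
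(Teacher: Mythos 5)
Your proposal is correct and follows essentially the same route as the paper: the same $\rho$, $\sigma$, $\tau$ from Proposition~\ref{prop:no-triangle}, the value $\Delta_{\max}(\rho,\sigma)=1$ obtained from (\ref{F-pure}) and (\ref{1-F<D}), the upper bound $2\sqrt{1-F_{\min}(\rho,\tau)^2}$ on the other two terms via the second inequality of (\ref{1-F<D}), and the limit $\theta\to 0$. The only cosmetic difference is that you close the pure-state computation with the trivial bound $\Delta_{\max}\le 1$ where the paper uses the upper half of (\ref{1-F<D}); both give the same conclusion.
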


\begin{proof}
Let $\rho $ and $\sigma $ be as of the proof of Proposition\thinspace \ref%
{prop:no-triangle}. Then, 
\begin{eqnarray*}
\Delta _{\max }\left( \rho ,\sigma \right) &=&1, \\
\Delta _{\max }\left( \rho ,\tau \right) +\Delta _{\max }\left( \tau ,\sigma
\right) &\leq &\sqrt{1-F_{\min }\left( \rho ,\tau \right) ^{2}}+\sqrt{%
1-F_{\min }\left( \tau ,\sigma \right) ^{2}} \\
&=&2\sqrt{1-\frac{1}{\left( \left\vert \cos \frac{\theta }{2}\right\vert
+\left\vert \sin \frac{\theta }{2}\right\vert \right) ^{2}}}.
\end{eqnarray*}%
Hence, making $\theta $ close enough to 0, we have the assertion.
\end{proof}

In general, $\Delta _{\max }\left( \rho ,\sigma \right) $ is hard to
compute. But when $\sigma =\left\vert \varphi \right\rangle \left\langle
\varphi \right\vert $, one can compute the number as follows. Any reverse
test $\left( \Phi ,\left\{ p,q\right\} \right) $ of $\left\{ \rho
,\left\vert \varphi \right\rangle \right\} $ is in the form of (\ref%
{reverse-pure}). Then, with $c:=\sum_{x\in \mathrm{supp}\,q\,}p\left(
x\right) $, 
\begin{equation*}
\left\Vert p-q\right\Vert _{1}\geq \frac{1}{2}\left\vert c-1\right\vert +%
\frac{1}{2}\left\vert \left( 1-c\right) -0\right\vert =1-c,
\end{equation*}%
where the inequality is due to monotonicity of $\left\Vert \cdot \right\Vert
_{1}$, and is achieved by $p\left( x\right) =cq\left( x\right) $( $x\in 
\mathrm{supp}\,q\,$). Therefore, to minimize $\left\Vert p-q\right\Vert _{1}$%
, one has to maximize $c$, which can be any positive number satisfying $\rho
-c\left\vert \varphi \right\rangle \left\langle \varphi \right\vert \geq 0$,
or equivalently, 
\begin{equation*}
1-c\rho ^{-1/2}\left\vert \varphi \right\rangle \left\langle \varphi
\right\vert \rho ^{-1/2}\geq 0.
\end{equation*}%
Therefore, 
\begin{equation*}
\Delta _{\max }\left( \rho ,\left\vert \varphi \right\rangle \right) =1-%
\frac{1}{\left\Vert \rho ^{-1/2}\left\vert \varphi \right\rangle \right\Vert
^{2}}=1-F\left( \rho ,\left\vert \varphi \right\rangle \right) ^{2}.
\end{equation*}%
Also, by the argument in Section\thinspace\ , we have the following
upperbound to $\Delta _{\max }\left( \rho ,\sigma \right) $:%
\begin{equation*}
\Delta _{\max }\left( \rho ,\sigma \right) \leq \Delta \left( M\left( \rho
\right) ,M\left( T\rho T\right) \right) \leq \Delta \left( \rho ,T\rho
T\right) ,
\end{equation*}%
where $T$ is as of (\ref{Tdef}), and $M$ is the projectors onto eigenspaces
of $T$. (Note the left most end is upperbounded by 
\begin{equation*}
\sqrt{1-F\left( \rho ,T\rho T\right) ^{2}}=\sqrt{1-F_{\min }\left( \rho
,\sigma \right) ^{2}},
\end{equation*}%
and gives a better bound than (\ref{1-F<D}).) 

\section{Discussions}

$F_{f}^{\min }$ introduced in this paper resembles $g$-quasi relative
entropy\thinspace \cite{Petz} 
\begin{equation*}
S_{g}\left( \rho ,\sigma \right) :=\mathrm{tr}\,\rho ^{1/2}g\left( L_{\sigma
}R_{\rho }^{-1}\right) \left( \rho ^{1/2}\right) ,
\end{equation*}%
where $g$ is operator convex function (thus, $-g$ is operator monotone ) on $%
[0,\infty )$, and 
\begin{equation*}
L_{\sigma }\left( A\right) =\sigma A,\,\,R_{\rho }\left( A\right) =A\rho .
\end{equation*}%
For example, with $g=1-x^{1/2}$, it gives rise to $S_{\alpha }\left( \rho
,\sigma \right) :=1-\mathrm{tr}\,\rho ^{1-\alpha }\sigma ^{\alpha }$. This
quantity satisfies 
\begin{equation*}
S_{\alpha }\left( \rho ,\sigma \right) \geq 1-F_{\alpha }^{\min }\left( \rho
,\sigma \right) ,
\end{equation*}%
where 
\begin{equation*}
F_{\alpha }^{\min }\left( \rho ,\sigma \right) :=\mathrm{tr}\,\rho
^{1/2}T^{2\left( 1-\alpha \right) }\rho ^{1/2}=\mathrm{tr}\,\rho
^{1/2}\left( \rho ^{-1/2}\sigma \rho ^{-1/2}\right) ^{1-\alpha }\rho ^{1/2},
\end{equation*}%
and the equality does not hold in general. 

It is interesting that both of the maximum of $F_{f}\left( p,q\right) $ are
achieved by minimal reverse tests. Some numerics suggests that this is not
the case for the minimum of $\Delta \left( p,q\right) $.

An open question is weather $F_{R}$ satisfies strong joint convexity or not.
\ Also, ,more explicit formula for $F_{R}$ and $\Delta _{\max }$ would be,
even for some special cases, of importance.


\begin{thebibliography}{99}
\bibitem{Ando} F. Kubo and T. Ando, \textquotedblleft Means of positive
linear operators,\textquotedblright\ Mathematische Annalen, vol. 246, pp.
205--224 (1980).

\bibitem{Bhatia} R. Bhatia, "Matrix Analysis", Springer-Verlag (1996).

\bibitem{Fuchs} C.\thinspace A.\thinspace Fuchs, "Distinguishability and
Accessible Information in Quantum Theory", Doctoral dissertation, University
of New Mexico (1996).

\bibitem{riemannian-geometry} Jurgen Jost," Riemannian Geometry and
Geometric Analysis" Fifth Edition\bigskip , Springer-Verlag (2008).

\bibitem{Matsumoto-0} K.\thinspace Matsumoto, A Geometry of Quantum States,
Master Thesis, University of Tokyo, Master's thesis (1995). 

\bibitem{Matsumoto-1} K. Matsumoto, A Geometrical Approach to Quantum
Estimation Theory, Doctoral dissertation, University of Tokyo (1998).

\bibitem{Matsumoto-2} K. Matsumoto, "Reverse Estimation Theory,
Complementality between RLD and SLD, and Monotone Distances",
http://xxx.lanl.gov/abs/quant-ph/0511170.

\bibitem{Matsumoto-3} K. Matsumoto, "Reverse test and Characterization of
Quantum Relative Entropy", http://xxx.lanl.gov/abs/1006.0302

\bibitem{nielsen-chuang} M. A. Nielsen, I. L. Chuang, "Quantum Computation
and Quantum Information", Cambridge University Press (2000).

\bibitem{Petz} D. Petz,  "Quasi-Entropies for Finite Quantum Systems",
Reports on Mathematical Physics, 23, 57 (1986)

\bibitem{Uhlmann3} A. Uhlmann, \textquotedblleft Density operators as an
arena for differential geometry,\textquotedblright\ Rep. Math. Phys.,
vol.33, pp.253-263(1993).
\end{thebibliography}
\end{document}